\DeclareRobustCommand{\rchi}{{\mathpalette\irchi\relax}}
\newcommand{\irchi}[2]{\raisebox{\depth}{$#1\chi$}}
\newcommand{\removelatexerror}{\let\@latex@error\@gobble}
\newtheorem{theorem}{Theorem}
\newtheorem{definition}{Definition}
\newtheorem{insight}{Insight}
\newtheorem{remark}{Remark}
\newcommand{\ignore}[1]{}
\begin{document}
%
\title{HashTran-DNN: A Framework for Enhancing Robustness of Deep Neural Networks against Adversarial Malware Samples}
%
%
%

\ignore{
\author{Deqiang~Li, 
		Ramesh Baral,
        Tao Li,
        Han Wang,
        Qianmu Li,
        Shouhuai Xu %
\thanks{D. Li and Q. Li are with Nanjing University of Science and Technology.}
\thanks{R. Baral is with Florida International University.}
\thanks{T. Li was with Florida International University and Nanjing University of Posts and Telecommunications.}
\thanks{H. Wang is with University of Texas at San Antonio.}
\thanks{S. Xu is with Department of Computer Science, University of Texas at San Antonio, 1 UTSA Circle, San Antonio, 78249 USA e-mail:{shxu@cs.utsa.edu} }}
}      

\author{Deqiang~Li, 
		Ramesh Baral,
        Tao Li,
        Han Wang,
        Qianmu Li, and 
        Shouhuai Xu %
\thanks{D. Li and Q. Li are with Nanjing University of Science and Technology. R. Baral is with Florida International University. T. Li is with Florida International University and Nanjing University of Posts and Telecommunications. H. Wang and S. Xu are with Department of Computer Science, University of Texas at San Antonio. Correspondence:{\tt shxu@cs.utsa.edu}}}

\maketitle

\begin{abstract}
Adversarial machine learning in the context of image processing and related applications has received a large amount of attention. However, adversarial machine learning, especially adversarial deep learning, in the context of malware detection has received much less attention despite its apparent importance. In this paper, we present a framework for enhancing the robustness of Deep Neural Networks (DNNs) against adversarial malware samples, dubbed {\em \underline{Hash}ing \underline{Tran}sformation \underline{D}eep \underline{N}eural \underline{N}etworks} (HashTran-DNN). The core idea is to use hash functions with a certain {\it locality-preserving} property to transform samples to enhance the robustness of DNNs in malware classification. The framework further uses a Denoising Auto-Encoder (DAE) regularizer to reconstruct the hash representations of samples, making the resulting DNN classifiers capable of attaining the locality information in the latent space. We experiment with two concrete instantiations of the HashTran-DNN framework to classify Android malware. Experimental results show that four known attacks can render standard DNNs useless in classifying Android malware, that known defenses can at most defend three of the four attacks, and that HashTran-DNN can effectively defend against all of the four attacks.
\end{abstract}

\begin{IEEEkeywords}
Adversarial machine learning, deep neural networks (DNNs), malware classification, adversarial malware detection, android malware, denoising auto-encoder (DAE).
\end{IEEEkeywords}

%
\IEEEpeerreviewmaketitle

\section{Introduction}

Malware is a major threat to cyber security, and the problem is becoming increasingly severe. For example, Symantec reports that about 355 millions, 357 millions, and 669 millions of malware variants were seen in the years of 2015, 2016, and 2017, respectively \cite{symantec:Online}. Kaspersky reports that malware attacked 2,871,965 and 1,126,701 devices in 2016 and 2017, respectively \cite{GREAT:Kaspersky,kasparsky:Online}.
This calls for effective solutions for detecting and classifying malware.

Machine learning has been widely used for malware detection and classification \cite{DBLP:journals/csur/YeLAI17}.
However, malware classifiers are susceptible to the attacks of \textit{adversarial malware examples} \cite{Biggio:Evasion, rndic_laskov,szegedyZSBEGF13,papernotMGJCS16,grossePM0M16,DBLP:conf/ijcai/HouYSA18,DBLP:conf/kdd/FanHZYA18,YeFOSINT-SI-2018,DBLP:conf/asunam/HouSCYB17,DBLP:conf/eisic/ChenYB17}. Adversarial samples can be obtained by perturbing (i.e., manipulating) a few features of malware samples that would be detected as malicious. However, these adversarial samples, while malicious, would be classified as benign.

Adversarial samples are a common threat, rather than specific to certain machine learning models or datasets
\cite{szegedyZSBEGF13, PapernotMG16, liu_2016}. 
In a broader context, the problem is known as {\em adversarial machine learning}, which is relevant to a range of domains (e.g., image processing, text analysis, and malicious website classifiers~\cite{grossePM0M16, goodfellow_2015, carliniW16a, wang_2017, PapernotMG16, xu2014evasion,DBLP:journals/corr/KurakinGB16a,takeru_2015,ens_adv_training, buckman2018}). 
Despite its clear importance, adversarial malware detection has not received the due amount of attention. This is true despite the recent studies~\cite{PapernotMG16, grossePM0M16, Hu2017, DBLP:journals/corr/RosenbergSRE17, wang_2017,DBLP:conf/ijcai/HouYSA18,DBLP:conf/kdd/FanHZYA18,YeFOSINT-SI-2018} that show how adversarial samples can easily evade malware classifiers.
The state-of-the-art is that there are no effective defenses \cite{wang_2017}. 
In this paper, we investigate a new defense against adversarial malware samples.

\smallskip

\noindent{\bf Our contributions}.
In this paper, we make three contributions.
First, we present a framework for enhancing the robustness of Deep Neural Networks (DNNs) against adversarial malware samples, dubbed {\em \underline{Hash}ing \underline{Tran}sformation \underline{D}eep \underline{N}eural \underline{N}etworks} (HashTran-DNN). The core idea is to use {\it locality-preserving} hash functions to transform samples to reduce, if not remove, the impact of adversarial perturbations.

Second, we propose using a Denoising Auto-Encoder (DAE) to regularize DNNs and reconstruct the hash representations of samples. This enables the resulting DNN classifier to capture the locality information in the latent space.
Moreover, the DAE can detect the out-of-distribution samples that are far from the support of the underlying distribution of the training data (i.e., filtering adversarial samples resulting from large perturbations).

Third, 
we introduce the notion of Locality-Nonlinear Hash (LNH) functions and presents a concrete construction that achieves a bounded distance-distortion property in the cube $\{0,1\}^n$ with respect to the {\it normalized Hamming} distance metric.
We conduct systematic experiments with a real-world dataset. Some of the findings are highlighted as follows.
\begin{itemize}
\item Standard DNNs for Android malware classification can be ruined by adversarial samples generated by the following four attacks: the {\em Jacobian-based Saliency Map Attack} (JSMA) ~\cite{papernot_2016,grossePM0M16}; the {\em Gradient Descent with Kernel Density Estimation} attack (GD-KDE)~\cite{Biggio:Evasion}; the Carlini-Wagner (CW) attack \cite{carliniW16a}; and the {\em Mimicry} attack~\cite{rndic_laskov}.

\item HashTran-DNN can substantially enhance the robustness of DNN classifiers against the four attacks mentioned above.
This robustness enhancement can be attributed to the fact that HashTran-DNN combines hash functions and DAEs to make DNN classifiers capable of attaining the locality information of samples in the latent space, rejecting the out-of-distribution samples, and defeating attacks that attempt to manipulate ``important'' features only (e.g., the CW attack) or manipulate features arbitrarily to a large extent (e.g., the Mimicry attack). 

\item With respect to the four attacks mentioned above, HashTran-DNN is more robust than the defense mechanism known as Random Feature Nullification (RFN) \cite{wang_2017}, which is not effective against any of the four attacks mentioned above.
With respect to the JSMA, GD-KDE, and CW attacks, HashTran-DNN is comparable to the iterative Adversarial Training defense \cite{DBLP:journals/corr/KurakinGB16a}, which however assumes that the defender knows the adversarial samples generated by the attackers. Moreover, HashTran-DNN is more robust than the iterative Adversarial Training defense against the Mimicry attack.
\end{itemize}

\noindent{\bf Paper outline}. 
The rest of the paper is organized as follows. Section \ref{sec:related-work} discusses the related prior studies. Section \ref{sec:background} reviews some preliminary knowledge. Section \ref{LSH} presents locality-preserving hash functions. Section \ref{sec:HashTran} describes the HashTran-DNN framework. Section \ref{sec:evl} presents our experiments and results. Section \ref{sec:lmt} discusses the limitations of the present study. Section \ref{sec:conclusion} concludes the paper.

\section{Related work}
\label{sec:related-work}
Deep learning is successful in image processing \cite{krizhevsky2012imagenet}, natural language processing~\cite{sutskever2014sequence}, and speech recognition~\cite{wavenet45774}. Since our focus is on improving the robustness of DNN-based malware classifiers, we emphasize on this topic.

From an attacker's perspective, there are three types of attacks: {\em black-box} vs. {\em white-box} vs. {\em gray-box}. In the black-box attack model, the attacker only has black-box access to the classifier; in the white-box attack model, the attacker knows everything about the defender's model; the gray-box attack model resides in between (e.g., the attacker has access to the defender's training dataset, feature set, and some information about the defender's DNN architecture). In this paper, we will focus on the gray-box model, especially the aforementioned four attacks (see Section \ref{adv_smps} for details).

From a defender's perspective, Figure~\ref{fig:01} highlights three defense approaches: {\em adjusted input}, {\em adjusted training procedure}, and {\em adjusted network architecture}, which are elaborated below. 

\begin{figure}[!htbp]
	\centering
	\scalebox{0.55}{
	\includegraphics{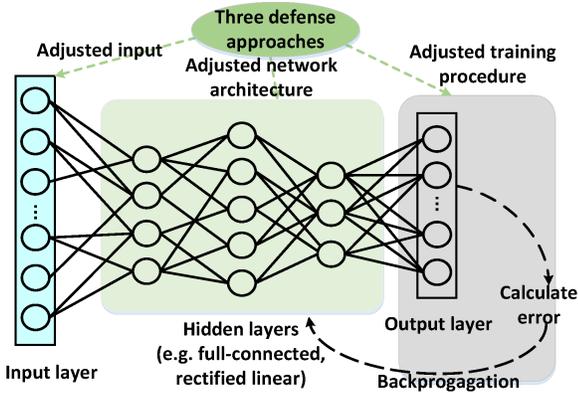}
	}
\caption{Defense approaches against adversarial samples.}
	\label{fig:01}
\end{figure}

The {\em adjusted input} approach aims to transform an input image to reduce its vulnerability to perturbation by, for example, retraining~\cite{DBLP:journals/corr/KurakinGB16a,adv_training_stability} with known adversarial samples and penalizing perceptible adversarial spaces.
The weakness of this approach is that the defender needs to know adversarial samples at the training time \cite{wang_2017,carliniW16a}, 
while noting that ensemble retraining~\cite{ens_adv_training} may alleviate the problem somewhat.
A related method 
\cite{gur_2014}
is to apply a generative model to clean up the distortions and uses a joint stacked Auto-Encoder (AE) to preprocess an input. However, the AE itself is vulnerable to adversarial samples \cite{gur_2014}. Feature squeezing~\cite{xu2017feature} and Thermometer encoding~\cite{buckman2018} can cope with image pixels with quantization strategies, but cannot deal with binary features.

This approach has been adopted to train malware detectors~\cite{Biggio6494573,xu2014evasion,wang_2017}. For example, Wang \textit{et al.}~\cite{wang_2017} introduced the idea of Random Feature Nullification (RFN), which randomly nullifies some features to make DNN classifiers non-deterministic.
However, sophistical attacks can confound with the RFN defense because it cannot nullify all of the ``important'' features, which may be exploited by the attacker.
HashTran-DNN also utilizes randomness to thwart adversarial malware samples, but is more robust than the RFN defense.

The {\em adjusted training procedure} approach aims to identify the optimal resistance against adversarial samples.
Goodfellow \textit{et al.}~\cite{goodfellow_2015} describe adversarial training as a regularization term for decreasing the generalization error.
This idea is later extended to the setting of semi-supervised learning \cite{takeru_2015}.
A limitation of this approach is also that the defender does not know all adversarial samples. 
Another idea \cite{mengc_2017,DBLP:journals/corr/GrosseMP0M17} is to treat adversarial examples as an extra category,
which is however ineffective \cite{DBLP:journals/corr/abs-1711-08478}. Inspired by the observation that large singular values in the weight matrices contribute to the vulnerability of DNNs~\cite{szegedyZSBEGF13}, yet another idea~\cite{DBLP:conf/icml/CisseBGDU17} is to use \textit{parseval networks}. 
In contrast to these studies, HashTrah-DNN uses DAEs 
to tune parameters at the hidden layers to decrease DNNs' sensitivity to adversarial perturbations.

The {\em adjusted network architecture} approach aims to adjust the architecture of the hidden layers to defend against adversarial samples. 
Krotov \textit{et al.}~\cite{DBLP:journals/corr/KrotovH17} propose the idea of \textit{Dense Associative Memory}, which uses higher-than-quadratic-order activation functions.
Another method 
\cite{DBLP:journals/corr/PapernotMWJS15} is to use a distillation mechanism to compress the vanilla model into a small network, but is known to be vulnerable \cite{carliniW16a}.

\section{Preliminaries} \label{sec:background}

In order to improve readability, Table \ref{table:notations} summarizes the main notations that are used throughout the paper.
\begin{table}[htbp!]
\centering\caption{Summary of notations\label{table:notations}}
\begin{tabular}{l|p{.32\textwidth}}
\hline
Notation & Meaning\\\hline
$n$& the number of dimensions of data samples\\
$o$& the number of classes \\
$d$& height of decision binary trees\\
$d_H(\cdot,\cdot)$ & Hamming distance  \\
$\bar{d}_H(\cdot,\cdot)$ & normalized Hamming distance  \\
$\vec{x},\vec{x_1},\vec{x_2} \in \mathbb{R}^n$  & samples represented as vectors\\
$x_i \in \mathbb{R}$ & the {\it i}-th component of $\vec{x}$ \\
$\delta_{\vec{x}} \in \mathbb{R}^n$ & adversarial perturbation to $\vec{x}$\\
$\vec{x}' \in \mathbb{R}^n$ & adversarial sample and $\vec{x}'=\vec{x}+\delta_{\vec{x}}$\\
$\epsilon$& upper bound of perturbations, i.e., $\|\delta_{\vec{x}}\|_0\leq \epsilon$\\
$y_{i} \in [o]$ & ground truth label of $\vec{x}_i$, $[o]=\{1,2,\dots,o\}$  \\
$\vec{y}_{i} \in \{0,1\}^o$ & one-hot encoding ground truth label of $\vec{x}_i$\\
$\rchi \subseteq \mathbb{R}^n \mathop{\times} [o]$& training data set $\{(\vec{x}_i,y_{i})\}_{i=1}^N$\\
$Z: \mathbb{R}^n \to {\mathbb{R}}^o$& a DNN (including its softmax layer)\\
$\vec{\hat{y}} \in {\mathbb{R}}^o$& the output of a DNN on input sample $\vec{x}$\\
$C: {\mathbb{R}}^o\to [o]$& classification labels based on $\vec{\hat{y}}$\\
$\mathcal{L}(\theta; \vec{x}, \vec{y})$  & cross entropy with parameters $\theta$, feature vector $\vec{x}$ and label vector $\vec{y}$ \\
$\{H_i\}_{i\in I}$&a family of hash functions \\
$\leftarrow_R$ & sampling from a set uniformly at random (with replacement) \\
$h_i: \mathbb{R}^n \to \mathbb{R}$ & a hash function sampled from $\{H_i\}_{i\in I}$\\
$g_{\text{LSH}}^K$ & a vector of $K$ locality-sensitive hash functions (LSH), i.e., $g_{\text{LSH}}^K=[h_1, h_2, \cdots, h_K]$\\
$\text{DT}^{m,d}_{i,j}$ & decision tree function with height $d$ . $\text{DT}^{m,d}_{i,j}: \mathbb{R}^m \to \{0,1\}^{2^{d-1}}$, where $i$ and $j$ are indices \\
$g_{\text{LNH}}^K$ & a vector of locality-nonlinear hash (LNH) functions, i.e., $g_{\text{LNH}}^K=[\text{DT}^{m,d}_{i,1},\dots,\text{DT}^{m,d}_{i,K}]$  \\
$\mathcal{H}$ & a family of hashing transformations, e.g., $g^K_{\text{LSH}}$ and $g^K_{\text{LNH}}$\\
${\bf H}_{\text{LSH}}$ or LSH &  a family of $g_{\text{LSH}}^K$ hashing transformations\\
${\bf H}_{\text{LNH}}$ or LNH & a family of $g_{\text{LNH}}^K$ hashing transformations \\
${\bf M}_{\mathcal{H}}(\vec{x})$ & Matrix representation for sample $\vec{x}$ under $\mathcal{H}$ \\ 

\hline
\end{tabular}
\end{table}

\subsection{Deep feed-forward neural networks} 
\label{DNNs}

In this paper, we focus on DNNs with a \textit{softmax} layer and $l$ hidden layers. A DNN classifier takes an input $\vec{x}\in\mathbb{R}^n$ and produces an output $\vec{\hat{y}} = Z(\vec{x})=[Z_1(\vec{x}),\dots,Z_o(\vec{x})]\in\mathbb{R}^o$, where $o$ is the number of classes (e.g., $o=2$ for malware classification). The output 
\begin{align}
&Z(\vec{x})=\textrm{softmax}(F(\vec{x})) \label{Eq:dnn}\\
~~~~&\textrm{where}~F(\vec{x})= F_l(\cdots F_2(F_1(\vec{x};\theta_1);\theta_2)\cdots) \label{ep:nn_nosoftmax}
\end{align}
gives the probabilities that sample $\vec{x}$ respectively belongs to one of the $o$ classes, where
\begin{equation}
F_i(\vec{x},\theta_i)=\sigma(\theta_i\cdot\vec{x}+b_i)
\end{equation}
with some non-linear {\em activation function} $\sigma$ (e.g., sigmoid, ReLU~\cite{schmidhuber2015deep}, or ELU~\cite{clevert2015fast}), {\em weight matrix} $\theta_i$, and {\em bias} $b_i$.
Let 
\begin{equation}
\label{definition:classifying}
C(Z(\vec{x})) = \arg\max_i(Z_i(\vec{x})), ~1\leq i \leq o
\end{equation}
denote the class (or label) a DNN classifier assigns  to sample $\vec{x}$, where $\arg\max\limits_i(\cdot)$ returns the index of the class that has the maximum probability.  At the training phase, a loss function 
is minimized 
via backpropagation (see, for example, \cite{lecun2015deep,kingmaB14}). Moreover, we consider a modified DNN $Z'$ which takes a matrix input $M=[\vec{m}_1;\dots;\vec{m}_L]$ with each $\vec{m}_i$ as a row vector, and
\begin{align}
&Z'(M)= \nonumber \\
&\textrm{softmax}(F_l(\cdots F_2(F_1^1(\vec{m}_1;\theta_1^1),\dots,F_1^L(\vec{m}_L;\theta_1^L);\theta_2)\cdots)). \label{Eq:dnn2}
\end{align}

\subsection{Attacks for generating adversarial samples} \label{adv_smps}

An adversarial sample is represented as $\vec{x}'=\vec{x}+\delta_{\vec{x}}$, where $\vec{x}$ is the original sample and $\delta_{\vec{x}}$ is a perturbation vector. Then, 
\begin{equation}
 {\sigma}({\theta_i}{\cdot}{\vec{x}'}+b_i)={\sigma}({\theta_i}{\cdot}{\vec{x}} + {\theta_i}{\cdot}{\delta_{\vec{x}}}+b_i) \textrm{,}  \label{eq:linear}
\end{equation}
where ${\theta_i}{\cdot}{\delta_{\vec{x}}}$ is the distortion item.
In the context of malware classification, any perturbation should preserve the malicious functionality of the original sample (i.e., an adversarial sample can run in the same environment to cause damages).

\smallskip

\noindent{\bf Small vs. large perturbations}.
We distinguish adversarial samples based on the degree of perturbation, {\em small} vs. {\em large}, because they can be treated differently.
On one hand, the fact that some elements of the weight matrix $\theta_i$ are overly large~\cite{goodfellow_2015}, and therefore can be exploited by the attacker to craft a slight perturbation vector $\delta_{\vec{x}}$ such that $C(Z(\vec{x}'))\neq C(Z(\vec{x}))$, where ``slight perturbation'' means that the degree of perturbation is bounded by a certain norm (e.g., $\ell_0$, $\ell_2$, or $\ell_\infty$ norm) such that $\vec{x'}$ is not far from $\vec{x}$. 
On the other hand, an attacker can arbitrarily manipulate the original sample $\vec{x}$ to generate an adversarial version $\vec{x'}$ that is far from the underlying data distribution. 

\smallskip

\noindent{\bf Four attacks in the gray-box model}.
As mentioned above, we focus on the \textit{gray-box} attack model, 
in which the attacker can train DNN classifiers on its own and then leverage them to generate adversarial samples~\cite{szegedyZSBEGF13,goodfellow_2015, carliniW16a,DBLP:ofs_abs-1802-00420}. The {\it transferability} property of machine learning contributes to the effectiveness of these attacks~\cite{szegedyZSBEGF13, papernotMGJCS16, PapernotMG16, papernot_2016, liu_2016, mengc_2017}. 

\subsubsection{Jacobian-based Saliency Map Attack (JSMA)} 
This is a {\em gradient-based} attack~\cite{papernot_2016}, in which
the attacker looks for the optimal perturbation based on the Jacobian matrix of the DNN feed-forward function with respect to an input $\vec{x}$, namely
\begin{equation}
J_Z(\vec{x})=\frac{\partial{Z(\vec{x})}}{\partial{\vec{x}}}= [{\frac{\partial{Z_j(\vec{x})}}{\partial{x_i}}}]_{i \in {1\cdots n}, j \in {1 \cdots o}}. \label{jm}
\end{equation}
In order to make the target DNN misclassify the perturbed version of $\vec{x}$, the attacker can leverage the saliency map
\begin{equation}
S(\vec{x}, y')[i]=
\begin{cases}
0 ~~~~~ \mbox{\text{if} $\frac{\partial{Z_{y'}(\vec{x})}}{\partial{x_i}} < 0$ or $\sum_{j\neq y'}\frac{\partial{Z_j(\vec{x})}}{\partial{x_i}} > 0$} \nonumber\\

(\frac{\partial{Z_{y'}(\vec{x})}}{\partial{x_i}})|\sum_{j\neq y'}\frac{\partial{Z_j(\vec{x})}}{\partial{x_i}}|\quad\mbox{otherwise},\nonumber
\end{cases}
\end{equation}
such that $\vec{x}$ is perturbed at locations $i$ if $S(\vec{x}, y')[i]$ gives the largest value. The perturbation maximizes the changes of the DNN classifier outputs in the desired output direction. This attack has been used against DNN-based malware classifiers~\cite{grossePM0M16}.

\subsubsection{Gradient Descent with Kernel Density Estimation (GD-KDE) attack}
This is an {\em optimization-based} attack \cite{Biggio:Evasion}, in which the attacker attempts to find the optimal adversarial sample $\vec{x}'$ that minimizes the following objective: 
\[\begin{aligned}
\min \limits_{x'}~\hat{g}(\vec{x'}) - \frac{\lambda}{N_t} \sum \limits_{i|y_i=y'}^{N_t} k(\vec{x'},\vec{x_i}), ~~\text{subject to} ~~ \|\vec{x'} - \vec{x}\| < \epsilon,
\end{aligned}
\]
where $\hat{g}(\vec{x'})$ estimates the cost of the posterior probability of the target label $y'$ with $y' \neq y$, $k(\cdot, \cdot)$ is a kernel density estimator (e.g., Laplacian kernel) for lifting $\vec{x'}$ to the populated region of target samples, 
$\lambda$ is the weight factor, $N_t$ is the number of target samples, and $\|\cdot\|$ refers to a norm of interest. 

\subsubsection{Carlini-Wagner (CW) attack}
This is an {\em optimization-based} attack~\cite{carliniW16a}, in which the attacker attempts to find an adversarial sample $\vec{x}'$ such that the perturbation vector $\delta_{\vec{x}}$ is minimized and the classifier misclassifies $\vec{x}'$, leading to the following formulation:
\begin{align}
	&\min \limits_{\delta_{\vec{x}}}{\lVert{\delta_{\vec{x}}}\rVert}_2^2 + {\lambda}f(y,\vec{x}+\delta_{\vec{x}})\textrm{,} ~~\textrm{where} \label{eq:cw} \\
	f(y,\vec{x}+\delta_{\vec{x}})&=\max\left\{F(\vec{x}+\delta_{\vec{x}})_y-\max_{i\neq y}\left\{F(\vec{x}+\delta_{\vec{x}})_i\right\},-\iota\right\},\nonumber
\end{align}
where $F(\cdot)$ is the output of a DNN prior to the softmax layer, $\iota$ is a scalar controlling the mis-classification confidence, $\lambda$ is the penalization factor. Since the $\ell_0$-norm is not differentiable, the $\ell_2$-norm can be used instead.

\subsubsection{Mimicry attack}
In this attack~\cite{rndic_laskov}, the attacker attempts to modify a malware sample $\vec{x}$ into an adversarial sample $\vec{x'}$ such that $\vec{x'}$ mimics a chosen benign sample as much as possible. This attack is applicable to any classifiers because it does not require the attacker to know the defender's machine learning algorithm.

\subsection{Two defenses proposed in the literature}
\label{sec:review-defense-methods}

We will compare HashTran-DNN with two defense methods.
The first defense method is called {\em Random Feature Nullifications} (RFN) \cite{wang_2017}, which randomly nullifies features at both the training phase and the testing phase. Specifically, given (i) a batch of $N$ training samples $\{\vec{x_i}\}_{i=1}^{N}$ and their one-hot encoding labels $\{\vec{y_i}\}_{i=1}^{N}$, and (ii) a random feature nullification function $f_F$, the defense aims to minimize the following objective function:
$$
\min \limits_{\theta}\frac{1}{N}\sum\limits_{i=0}^{N}\mathcal{L}(\theta\mathrm{;}\,Z(f_F(\mathbf{0}_i\mathrm{,}\vec{x_i}))\mathrm{,} \vec{y_i}), \label{obj_fun_RFN}
$$
where $\mathbf{0}_i=\lceil{n\times{p_i}}\rceil$ is the number of nullified features in input $\vec{x_i}$, the probability $p_i$ is sampled from a Gaussian distribution, and $\lceil\cdot\rceil$ is the ceiling function.

The second defense method is called {\em Adversarial Training}~\cite{goodfellow_2015,szegedyZSBEGF13,DBLP:journals/corr/KurakinGB16a},
and can be used for most machine learning algorithms. The {\em Iterative Adversarial Training} method ~\cite{DBLP:journals/corr/KurakinGB16a} aims to minimize the following cost function: 
$$
\resizebox{0.48\textwidth}{!} {$\min \limits_{\theta}\frac{1}{N_1+\lambda(N-N_1)}\left(\sum\limits_{i=0}^{N_1}\mathcal{L}(\theta\mathrm{;}~Z(\vec{x_i})\mathrm{,}\vec{y_i}) +  \lambda\sum\limits_{i=N_1}^{N}\mathcal{L}(\theta\mathrm{;}~Z(\vec{x_i}')\mathrm{,}\vec{y_i})\right)$}, \label{obj_fun_adv_train}
$$
where $\vec{x_i}'$ is an adversarial example perturbed from $\vec{x_i}$, $N_1$ is the number of unperturbed samples in the training set, $\lambda$ strengths the penalization for adversarial mis-classifications. A similar idea, called {\em proactive training}, was investigated in \cite{xu2014evasion} with respect to decision-tree classifiers .

\section{Locality-Preserving Hash Functions}
\label{LSH}

In this section we first review Locality-sensitive hashing (LSH) and then introduce locality-nonlinear hashing (LNH).

\subsection{LSH} 

LSH \cite{gionis_1999} is a family of hash functions, denoted by $\{H_i\}_{i\in I}$ where $H_i:\mathbb{R}^n\to \mathbb{R}$ with the following property: 
For a fixed $H_i$ (determined by index $i$), two ``nearby'' inputs are mapped to the same hash value with a high probability, but two ``distant'' inputs are mapped to the same hash value 
with a small probability, where the distance can be \textit{Jaccard}, \textit{Hamming} (based on the $\ell_0$-norm, and denoted by $d_H$), or the $\ell_p$-norm. 
In the present paper, we focus on the $\ell_0$-norm because the datasets use a binary representation of malware features, leading to the {\it Hamming} space. 
Formally, we have:

\begin{definition}[LSH hash functions \cite{gionis_1999}]
A LSH function $H_i(\cdot)$ has the following {\em locality-sensitivity} property:
For two inputs  $\vec{x}_1$ and $\vec{x}_2$ such that $d_H(\vec{x}_1,\vec{x}_2) \leq \epsilon$ for some $\epsilon$, 
the probability $\Pr(H_i(\vec{x}_1)=H_i(\vec{x}_2))$ is large;
otherwise, $\Pr(H_i(\vec{x}_1)=H_i(\vec{x}_2))$ is small.
\end{definition}
 
An example of LSH is the following {\cite{gionis_1999}}. Consider the $Hamming$ distance over a bit vector $\vec{x}\in\{0,1\}^n$, LSH functions $\{H_i\}_{i\in I}$ can be constructed from the {\em bit sampling} method \cite{gionis_1999}, which randomly selects a bit from the input $\vec{x}$ as the hash value.
However, this construction has two weaknesses: 
(i) It is a linear transformation, and therefore vulnerable to adversarial examples~\cite{goodfellow_2015}.
(ii) It leads to linearly correlated hash values when applied to samples that are overly sparse in $\{0,1\}^n$, which can undermine the {locality-sensitivity property} and therefore its usefulness in defending against adversarial samples.

In order to enhance the locality-sensitivity property, we can use a vector of $K$ LSH functions,
denoted by 
$$g^K_{\text{LSH}}=[h_1\textrm{,}~h_2\textrm{,}~\ldots\textrm{,}~h_K],$$ 
where $h_j \leftarrow_R \{H_i\}_{i\in I}$ for $1\leq j \leq K$ and ``$\leftarrow_R$'' means sampling uniformly at random (with replacement).
This leads to a hashing transformation 
$$g^K_{\text{LSH}}(\vec{x})=[h_1(\vec{x})\textrm{,}~h_2(\vec{x})\textrm{,}~\ldots\textrm{,}~h_K(\vec{x})].$$
We can repeat the aforementioned sampling process,
leading to $L$ independent $g^K_{\text{LSH}}$ functions, denoted by 
$${\mathbf H}_{\text{LSH}}=\{g^K_{\text{LSH},1};g^K_{\text{LSH},2};\ldots;g^K_{\text{LSH},L}\}.$$
When the meaning is clear from the context, we may use LSH and ${\bf H}_{\text{LSH}}$ interchangeably to simplify the presentation.

\begin{algorithm}
\KwIn{Training data $\rchi=\{(\vec{x},y)\}$, where $y$ is the ground truth label of $\vec{x}$; LSH family $\{H_i\}_{i\in I}$; $d$ (Decision Tree height); $m$ (the length of random feature sub-vectors for training a Decision Tree); $L$ (number of hashing transformations); $K$ (number of Decision Trees used in a hashing transformation)}
\KwOut{${\bf H}_{\text{LNH}}(\vec{x})$, which is a binary matrix of $L$ rows and $K\times 2^{d-1}$ columns}

\For{$i=$ \rm{1} to ${L}$}{
\For{$j=$ \rm{1} to ${K}$}{
Choose $m$ LSH functions $h_1\mathrm{,}h_2\mathrm{,}\ldots\mathrm{,}h_m \leftarrow_R \{H_i\}_i$;

\For{$(\vec{x},y) \in \rchi$}{
define $[h_1(\vec{x})\mathrm{,}~h_2(\vec{x})\mathrm{,}~\ldots\mathrm{,}~h_m(\vec{x})]$ as feature representation of $\vec{x}$;
}

Train a full-binary Decision Tree $\text{DT}_{i,j}^{m,d}$ of height $d$ (and $2^{d-1}$ leaves) from the transformed data $\{[h_1(\vec{x})\mathrm{,}h_2(\vec{x})\mathrm{,}\ldots\mathrm{,}h_m(\vec{x})],~y\}_{(\vec{x},y) \in \rchi}$;

Label the leave of Decision Tree $\text{DT}_{i,j}^{m,d}$ corresponding to the path $\text{DT}_{i,j}^{m,d}(\vec{x})$ as ``1'' and each of the other $2^{d-1}-1$ leaves as ``0'';

}
}
\Return ${\bf H}_{\text{LNH}}(\vec{x})=[g^K_{\text{LNH},1}(\vec{x})\textrm{;}~g^K_{\text{LNH},2}(\vec{x})\textrm{;}~\ldots\textrm{;}~g^K_{\text{LNH},L}(\vec{x})]$ ~~/* a matrix of $L$ rows and $K\times 2^{d-1}$ columns */

\caption{Constructing ${\bf H}_{\text{LNH}}$ from LSH family $\{H_i\}_i$}
\label{LFH_alg}
\end{algorithm}

\subsection{LNH (Locality-Nonlinear Hashing)} 

We introduce LNH, which does not have the afore-mentioned weaknesses of {\it bit sampling}.
As shown by Algorithm~\ref{LFH_alg}, the idea is to construct a family of hashing transformations ${\bf H}_{\text{LNH}}$ from LSH functions $\{H_i\}_{i\in I}$, as follows:
\begin{itemize}
\item[(i)] Use LSH functions to transform samples $\{\vec{x}\}$ to their hashed values $[h_1(\vec{x}),h_2(\vec{x}),\ldots,h_m(\vec{x})]$ for $K$ independent times. 
\item[(ii)] Use these hashed values (i.e., has representations) of the training samples and their labels, namely $\{[h_1(\vec{x})\mathrm{,}~h_2(\vec{x})\mathrm{,}~\ldots\mathrm{,}~h_m(\vec{x})];y\}_{(\vec{x},y) \in\,\rchi}$, to train a Decision Tree $\text{DT}_{i,j}^{m,d}$ of height $d$ and $2^{d-1}$ leaves, where $1\leq j \leq K$. 
\item[(iii)] For each Decision Tree $\text{DT}_{i,j}^{m,d}$, label its leaves as follows: The leave on the path corresponding to $\text{DT}_{i,j}^{m,d}(\vec{x})$ is labeled as ``1'', and each of the other $2^{d-1}-1$ leaves is labeled as ``0''. Then, define $g_{\text{LNH}}^K=[\text{DT}^{m,d}_{i,1},\dots,\text{DT}^{m,d}_{i,K}]$ and hence, $g^K_{\text{LNH},i}(\vec{x})=[\text{leaves of DT}_{i,1}^{m,d}~\text{from left to right},\ldots$, leaves of $\text{DT}_{i,K}^{m,d}~\text{from left to right} ]$, which is a binary vector of $K\times 2^{d-1}$ elements.

\item[(iv)] Repeat (i)-(iii) for $L$ times, leading to  a family of $g_{\text{LNH}}^K$ hashing transformations, i.e., ${\bf H}_{\text{LNH}}=\{g^K_{\text{LNH},1}\textrm{;}~g^K_{\text{LNH},2}\textrm{;}~\ldots\textrm{;}~g^K_{\text{LNH},L}\}$. 
\end{itemize}
When the meaning is clear from the context, we may use use LNH and ${\bf H}_{\text{LNH}}$ interchangeably to simplify the presentation.

\begin{figure}[!htbp]
	\centering
	\scalebox{0.25}{
	\includegraphics{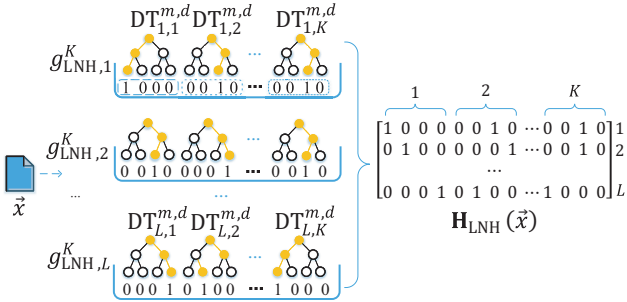}
	}
\caption{Illustration of computing ${\bf H}_{\text{LNH}}(\vec{x})$, where each Decision Tree is a full binary tree of height $d=3$.}
	\captionsetup{font={scriptsize}}
	\label{fig:LFH}
\end{figure}

Figure~{\ref{fig:LFH}} illustrates the construction of ${\bf H}_{\text{LNH}}$ and computation of ${\bf H}_{\text{LNH}}(\vec{x})$ as a binary matrix of $L$ rows and $K\times 2^{d-1}$ columns. Now we make some observations. First, the use of Decision Trees makes ${\bf H}_{\text{LNH}}(\cdot)$ nonlinear and non-differentiable. Second, when LSH functions are constructed from the {\em{bit sampling}} method, ${\bf H}_{\text{LNH}}$ can be seen as a particular kind of {\em random subspace} method~\cite{709601_rss}, which decreases the generalization error of learning-based models. Third, it is known \cite{PapernotMG16} that individual Decision Trees are vulnerable to ``cross-model'' adversarial examples crafted from other learning techniques (e.g., support vector machine, DNNs). This is no concern because we use a forest of Decision Trees, each of which {is learned from some random feature subspace}.

\begin{figure*}[!htbp]
\centering
\scalebox{0.439}{\includegraphics{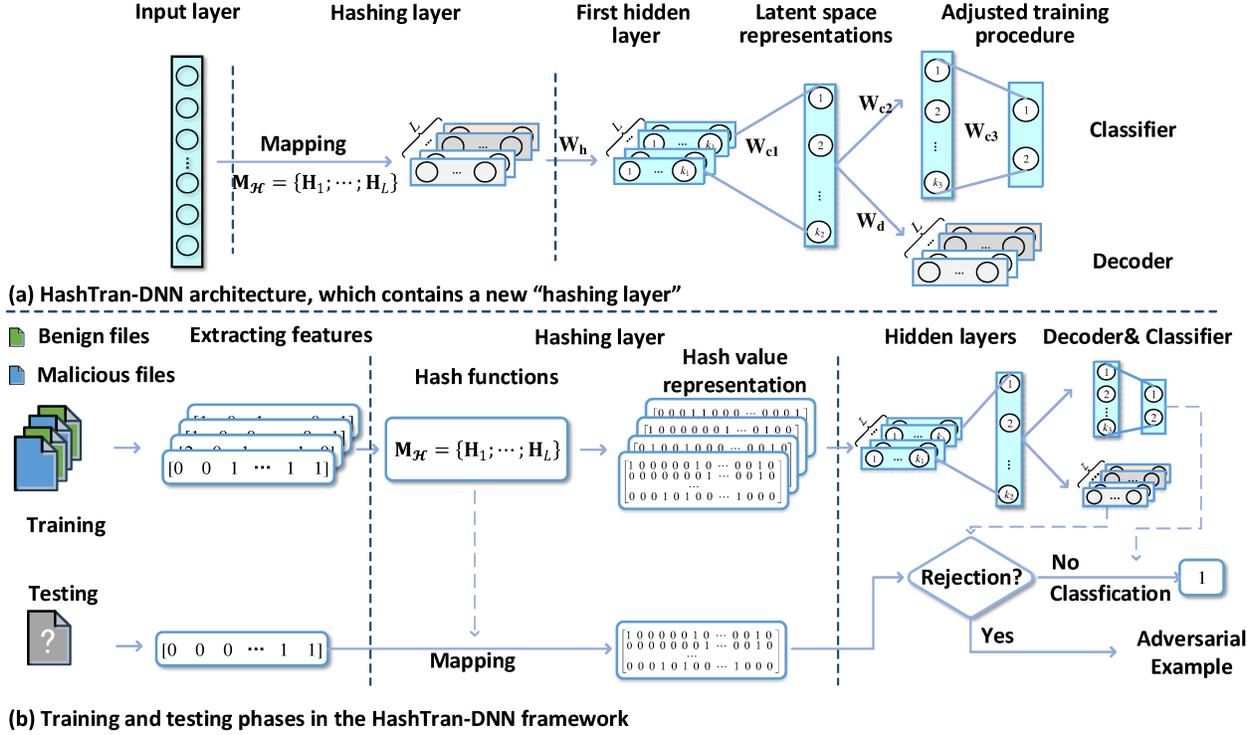}}
\caption{The HashTran-DNN framework. (a) HashTran-DNN architecture, which adds a hashing layer to a feed-forward DNN.
(b) Training and testing phases in HashTran-DNN: the training phase aims to minimize the classification error and leverage the DAE (to reconstruct the hash representations) to regularize the DNN.
The testing phase aims to reject the out-of-distribution examples based on the DAE reconstruction error and predict the class for the remaining examples by classifier.
}
\captionsetup{font={scriptsize}}
\label{fig:002}
\end{figure*}

\section{The HashTran-DNN framework} \label{sec:HashTran}

In this section, we present the HashTran-DNN framework and a theoretic analysis of it. 

\subsection{Basic idea} 
\label{basic_ideal}

The HashTran-DNN framework is centered at the idea of constructing a family of hashing transformations $\mathcal{H}=\{{\bf H}_j\}_{j\in T_{\mathcal{H}}}$ with a {\em locality-preserving property} specified by Ineq.~\eqref{geq:2Pr} below. Two examples of $\mathcal{H}$ are the aforementioned ${\bf H}_{\text{LSH}}$ and ${\bf H}_{\text{LNH}}$, meaning that ${\bf H}_j$ can be instantiated as either $g_{\text{LSH},\cdot}^K$ or $g_{\text{LNH},\cdot}^K$ and that ${\bf H}_j(\vec{x})$ returns a vector. Let us first consider the case the attacker makes small perturbations while preserving the malicious functionality of the adversarial samples.
Specifically, consider a sample $\vec{x_i}$, its label $y_i$, and an adversarial sample $\vec{x_i}'$ derived from $\vec{x_i}$ and perturbation $\|\delta_{\vec{x_i}}\|_0\leq \epsilon$ for some $\epsilon>0$, 
the defense aims to assure:
\begin{align}
\mathbf{E}[\#\{j\in T_{\mathcal{H}}:{\bf H}_j(\vec{x}_i)={\bf H}_j(\vec{x_i}')\}] \geq  \Theta \label{geq:2Pr} \\ 
C(Z'({\bf M}_{\mathcal{H}}(\vec{x}_i))))=C(Z(\vec{x}_i)),\label{eq:keeping}
\end{align}
where $\mathbf{E}[\cdot]$ is the expectation function, $\#$ denotes the cardinality of a set, $\Theta$ is the desired robustness,
${\bf M}_{\mathcal{H}}(\vec{x})$ is the matrix representation of $\mathcal{H}(\vec{x})$, $Z'$ is a newly constructed DNN that takes hashing matrix as the input (see Eq.~(\ref{Eq:dnn2})), and $C(\cdot)$ returns a DNN's prediction on labels of samples.

On one hand, Ineq.~(\ref{geq:2Pr}) indicates the {\em robustness} against adversarial samples.
Specifically, a large $\Theta$ indicates a high robustness  
because a large $\Theta$ means there are more hashing transformations $\{\mathbf{H}\}_{j\in T_{\mathcal{H}}}$ that can ``eliminate'' the effect of adversarial perturbations (i.e., the perturbations are useless to the attacker). As we will see, this property can be rigorously proved as Theorem \ref{Theo:robust} in Section \ref{sec:analysis}.

On the other hand, Eq.~(\ref{eq:keeping}) assures the {\em classification accuracy} by requiring that the newly constructed DNN assigns the same label to the hash-transformed representation ${\bf H}_j(\vec{x})$ of $\vec{x}$ as to $\vec{x}$. However, Eq.~(\ref{eq:keeping}), while intuitive, is difficult to prove. Therefore, we consider the following alternative with a weaker guarantee: Given binary feature vectors, there exist hashing transformations, ${\bf H}_j\in \mathcal{H}$, that are close to the distance-preserving transformation with respect to the normalized Hamming distance. This means that the hashing transformation would not cause much metric distortion in the Hamming cube of the original feature space. This is important because classification accuracy is highly dependent upon the underlying low dimensional structure of the samples, as reported by a recent study~\cite{jiang2018trust}. The alternate guarantee is proven as Theorem \ref{Theo:accuracy} in Section \ref{sec:analysis}.

Now we consider the case of large perturbations while preserving the malicious functionality of the adversarial samples. In this case, Ineq.\eqref{geq:2Pr} can be thwarted. 
In order to defend against such attackers, we leverage auto-encoders to detect adversarial samples that are far from the training sample space~\cite{mengc_2017}. At this stage, we are only able to empirically show the effectiveness of HashTran-DNN against large perturbations; theoretic treatment is left as an open problem.

\subsection{The framework}

Figure~\ref{fig:002} highlights the HashTran-DNN framework, 
which adds a ``hashing layer'' to a feed-forward DNN.
The training phase of HashTran-DNN has three steps: 
(i) extracting features for representing training samples; (ii) using hash functions to transform the feature representation of the training samples to $L$ vector representations (i.e., matrix representations); and
(iii) learning a DNN from the matrix representation.
Now we elaborate these steps and discuss the testing phase.
 
\subsubsection{Extracting features}
There have been numerous studies on defining features for malware detections (e.g., \cite{grossePM0M16, wang_2017, dang2017evading, Chen:2017:SES,DBLP:journals/virology/SalehLX18,DBLP:conf/milcom/SalehRX17,Sayfullina7345283}). 
These features may be extracted via static analysis, dynamic analysis, or a hybrid of them. 
Particularly for binary feature vectors, ``1'' means a feature is present in the sample and ``0'' means the feature is absent.
Many malware detectors (e.g., \cite{grossePM0M16, wang_2017,Chen:2017:SES}) use binary representations while achieving a satisfying accuracy. 

\subsubsection{Hashing layer}
The hashing layer uses some ${\mathcal H}$, such as ${\bf H}_{\text{LSH}}$ or ${\bf H}_{\text{LNH}}$, to transform the binary feature representation to the vector (or more precisely, matrix) representation. 
(In our experiments that will be presented in Section \ref{sec:evl}, each sample will be transformed to $L$ vectors $\{{\bf H}_j(\vec{x})\}_{j=1}^L$, which formulates a binary matrix ${\bf M}_{\mathcal{H}}(\vec{x})\in \mathbb{Z}_2^{L\times T}$  for some $T$.)
Assuming $k_1$ neurons in the first hidden layer handling each row vector, we have 
$L \times k_1$ neurons in the first hidden layer. Note that the row order in the matrix representation of samples does not matter, because each row is treated independently in the first layer of DNN $Z'$ before mixing them at later stages. 

\subsubsection{Learning DNNs}
At the training phase, we use the hashed vector representation to learn a DNN classifier. What is unique to the HashTran-DNN learning is that the training phase not only aims to minimize the classification error, but also leverages the DAE to regularize the DNN. Basically, the DAE encodes the training samples compactly, and reconstructs the input from the compact encoding (also known as the {\em latent space representations}). This allows the DAE to retain the locality information by reconstructing the hash representation matrix $\mathbf{M}_{\mathcal{H}}$, which is related to manifold learning~\cite{bengio2013representation} or representation learning~\cite{Goodfellow-et-al-2016}.

Specifically, the DAE maps a perturbed matrix ${\bf M}_{\mathcal{H}}(\vec{x})\oplus \Delta$
for some random noise matrix $\Delta$ 
to ${\bf M}_{\mathcal{H}}(\vec{x})$. 
In case of $\ell_0$-norm on binary features, a random subset of elements in $\Delta$ have value 1 and the other elements have value 0, where
the number of 1's is $\lceil{L\times T\times p}\rceil$ for some $p$ sampled from the Gaussian Distribution $\mathcal{N}(0,{(\epsilon/n)}^2)$. The ``corrupt'' elements of ${\bf M}_{\mathcal{H}}(\vec{x}) \oplus \Delta$ are meant to simulate adversarial perturbations. 
The learned DNN is robust to small perturbations by performing the DAE regularization upon the hash transformation, which captures the locality preserving property in the data accurately when the reconstruction error is small.

The weight set of the DAE is $\theta_d=\{\mathbf{W_{h}};\mathbf{W_{c1}}$;$\mathbf{W_d}\}$, and the weight set of the classifier is $\theta_c=\{\mathbf{W_{h}};\mathbf{W_{c1}}$;$\mathbf{W_{c2}}$;$\mathbf{W_{c3}}\}$, where $\mathbf{W_{h}}$, $\mathbf{W_{c1}}$, $\mathbf{W_{c2}}$, and $\mathbf{W_{c3}}$ are weight matrices (cf. Figure~\ref{fig:002}).
During the training phase, we treat DAE as part of the HashTran-DNN framework and train all the weights jointly. Given a training set $\{(\vec{x_i},y_i)\}_{i=1}^N$ and a hashing transformation $\mathcal{H}$, we convert the scalar label $\{y_i\}_{i=1}^N$ into the one-hot encoding labels $\{\vec{y}_i\}_{i=1}^N$ and consider the classification loss function
\begin{equation}
\mathcal{L}_{\scriptsize{C}} = \frac{1}{N}\sum\limits_{i=0}^{N}\mathcal{L}(\theta_c\mathrm{;}\,(Z'(\vec{x_i}). \vec{y_i})). \label{obj_fun}
\end{equation}
The widely-used DAE reconstruction loss function $\mathcal{L}_{\scriptsize{D}}$ is 
\begin{equation}
\label{eq:other-LD}
\mathcal{L}_{\scriptsize{D}} = \frac{1}{N}\sum\limits_{i=0}^{N}\parallel{\text DAE}({\bf M}_{\mathcal{H}}(\vec{x_i}) \oplus \Delta_i) - {\bf M}_{\mathcal{H}}(\vec{x_i})\parallel_2^2,
\end{equation}
where $\text{DAE}(\cdot)$ is the output of DAE whose activation function at the last layer is the sigmoid. However, the definition given by Eq.\eqref{eq:other-LD} is not suitable for the setting of the present paper because the hash representation is binary. As such, we use the cross-entropy function
\begin{align}
\mathcal{L}_D &=-\frac{1}{N}\sum\limits_{i=0}^{N} [{\bf M}_{\mathcal{H}}(\vec{x_i})\log({\text DAE}({\bf M}_{\mathcal{H}}(\vec{x_i}) \oplus \Delta)) \nonumber \\
& + (1 - {\bf M}_{\mathcal{H}}(\vec{x_i}))\log(1 - {\text DAE}({\bf M}_{\mathcal{H}}(\vec{x_i}) \oplus \Delta))].
\end{align}
We train HashTran-DNN with the final loss function:
\begin{equation}
Loss = \mathcal{L}_{\scriptsize{C}} + \lambda_{\scriptsize{D}}\mathcal{L}_{\scriptsize{D}},
\end{equation}
where $\lambda_{\scriptsize{D}} > 0$ is a hyper-parameter that is tuned to strength the DAE term via an exponential search~\cite{bergstra2012random}. This kind of training process makes the learned DNN classify slightly perturbed samples correctly and allows us to use the DAE to detect the out-of-distribution samples in the testing phase.

\subsubsection{Testing}
Since adversarial examples may be far from the support of the distribution of the training data, HashTran-DNN aims to reject such out-of-distribution samples (i.e., treating them as adversarial samples) before predicting the class (or a label) for the testing samples. 
The detection of out-of-distribution samples is based on the DAE reconstruction error, an idea inspired by Magnet~\cite{mengc_2017}. Because DAE encoding and reconstruction are operated on the training set, if a testing sample is drawn from the same distribution as the training samples, then a small reconstruction error is expected; otherwise, we can consider such testing sample as outliers.
Therefore, we need a threshold $t_{r}$ for flagging whether an input is out-of-distribution or not, which is a hyperparameter of the DAE. Intuitively, a smaller $t_{r}$ can help detect more adversarial examples, but runs into the risk of filtering out more normal samples, leading to a degradation in the classification accuracy. This suggests us to choose $t_{r}$ via a validation set of non-adversarial samples such that these samples can pass the filter at a high rate. 
HashTran-DNN allows to predict the class of testing samples if they can pass the DAE-based filter, as shown in the bottom of Figure \ref{fig:002}. 

\begin{remark}
Although HashTran-DNN focuses on enhancing the robustness of DNNs against adversarial samples, the framework can be equally applied to enhance the robustness of other machine learning models. 
Consider linear SVM under binary classification as an example. Given $\{{\bf H}_1,{\bf H}_2,\ldots,{\bf H}_L\}$ and an instance $\vec{x}$, the confidence $score$ can be defined as
\begin{equation}
 score =  {{w}^\textrm{T}}\cdot\lbrack{w}^\mathrm{T}_j\cdot{{\bf H}_j(\vec{x})} + b_j\rbrack + b\mathrm{;}\,~~(j=1\mathrm{,}\cdots\mathrm{,}L), \label{svm}
\end{equation}
where the ${w}$'s and $b$'s are weight vectors and  biases of SVMs. Eq.(\ref{svm}) says that $L$ internal SVMs learn and test over the corresponding hash transformation ${\bf H}_j(\cdot)$, and an external SVM aggregates their outputs to vote the final confidence with weight $w$. A majority of the $j$'s with ${\bf H}_j(\vec{x}')={\bf H}_j(\vec{x})$ lead to the robustness against adversarial example $\vec{x'}$. The $(L + 1)$ SVM models can be trained as usual~\cite{Hsieh:2008:DCD:1390156.1390208}.\end{remark}

\subsection{Analysis} \label{sec:analysis}

HashTran-DNN can accommodate any $\mathcal{H}$ that is {\em locality-preserving}, such as the 
aforementioned ${\bf H}_{\text{LSH}}$ constructed from {\it bit sampling} and the ${\bf H}_{\text{LNH}}$ obtained from Algorithm \ref{LFH_alg}. 
In the subsequent analysis, we make the following restrictions:
\begin{itemize}
\item Consider binary feature vectors, namely $\vec{x}\in\{0,1\}^n$; 
\item The distance function is the \textit{Hamming} distance, denoted by $d_H(\cdot,\cdot)$, namely $d_H(\vec{x_1}, \vec{x_2}) = {\|{\vec{x_1}-\vec{x_2}}\|}_0$ for 
$\vec{x_1},\vec{x_2} \in \{0,1\}^n$. The normalized Hamming distance is $\bar{d}_H(\vec{x_1}, \vec{x_2})=\frac 1n {\|{\vec{x_1}-\vec{x_2}}\|}_0$, namely the fraction of the coordinates where $\vec{x_1}$ and $\vec{x_2}$ are different. Note that $\bar{d}_H(\vec{x_1}, \vec{x_2})=1-{\rm Pr}(H_i(\vec{x_1})=H_i(\vec{x_2}))$.
\end{itemize}

In what follows we prove (i) the existence of a family of hashing transformations $\mathcal{H}$ with the locality-preserving property that satisfies Ineq. \eqref{geq:2Pr} and (ii) an approximation to distance-preserving property in an effort to approach Eq. \eqref{eq:keeping}.

\begin{theorem}[HashTran-DNN robustness]\label{Theo:robust} 
There exist hashing transformations such that Ineq. (\ref{geq:2Pr}) holds.
\end{theorem}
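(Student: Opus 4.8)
The plan is to prove the statement by exhibiting one explicit instantiation of $\mathcal{H}$ for which the left-hand side of Ineq.~\eqref{geq:2Pr} admits a clean closed-form lower bound; the natural choice is $\mathcal{H}={\bf H}_{\text{LSH}}$ built from the \emph{bit sampling} construction, so that $T_{\mathcal{H}}=[L]$ and each ${\bf H}_j=g^K_{\text{LSH},j}=[h_1,\dots,h_K]$ with every $h_k\leftarrow_R\{H_i\}_{i\in I}$ returning a uniformly random coordinate of its input. The whole argument is then just linearity of expectation layered on top of the single-hash collision identity already recalled in Section~\ref{sec:analysis}.

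Concretely, I would proceed in three short steps. \emph{(Single hash.)} For a bit-sampling hash $h$ and an adversarial pair $\vec{x}_i,\vec{x}_i'$ with $\|\delta_{\vec{x}_i}\|_0\le\epsilon$, use $\Pr(h(\vec{x}_i)=h(\vec{x}_i'))=1-\bar{d}_H(\vec{x}_i,\vec{x}_i')$ together with $\bar{d}_H(\vec{x}_i,\vec{x}_i')=\frac{1}{n}\|\delta_{\vec{x}_i}\|_0\le \epsilon/n$ to get $\Pr(h(\vec{x}_i)=h(\vec{x}_i'))\ge 1-\epsilon/n$. \emph{(One block of length $K$.)} Since the $K$ hashes inside $g^K_{\text{LSH},j}$ are sampled independently, ${\bf H}_j(\vec{x}_i)={\bf H}_j(\vec{x}_i')$ precisely when none of the $K$ sampled coordinates lands in the set of at most $\epsilon$ perturbed positions, hence $\Pr({\bf H}_j(\vec{x}_i)={\bf H}_j(\vec{x}_i'))\ge (1-\epsilon/n)^K$. \emph{(Aggregate over $L$ blocks.)} Letting $X_j$ be the indicator random variable of the event ${\bf H}_j(\vec{x}_i)={\bf H}_j(\vec{x}_i')$, the quantity inside the expectation in Ineq.~\eqref{geq:2Pr} equals $\sum_{j=1}^L X_j$, and linearity of expectation gives $\mathbf{E}[\sum_{j=1}^L X_j]=\sum_{j=1}^L\Pr(X_j=1)\ge L(1-\epsilon/n)^K$; taking $\Theta:=L(1-\epsilon/n)^K$ completes the proof. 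The identical scheme covers $\mathcal{H}={\bf H}_{\text{LNH}}$ from Algorithm~\ref{LFH_alg}: a tree $\text{DT}^{m,d}_{i,j}$ depends only on its $m$ bit-sampled inputs, so its leaf output (hence the labeling) is unchanged whenever none of those $m$ samples hits a perturbed coordinate, which yields $\Theta_{\text{LNH}}:=L(1-\epsilon/n)^{mK}$.

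The existential claim as literally stated is easy; the real work is to make $\Theta$ \emph{meaningful} as a robustness guarantee and to connect it to the rest of the framework. The main point to get right is the $K$-versus-$L$ trade-off: a larger $K$ is what gives bit sampling its useful gap between the collision probabilities of ``near'' and ``far'' inputs, but it also shrinks $(1-\epsilon/n)^K$ geometrically, so one needs $\epsilon\ll n$ (which holds in the binary malware feature setting) and a correspondingly larger $L$ so that $\Theta$ stays bounded away from $0$. A secondary subtlety, for the LNH variant, is that the $m$ LSH functions are sampled with replacement and may produce repeated coordinates; this only increases the collision probability, so the stated bound is still valid, but it should be noted rather than glossed over. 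Finally I would remark that the bound is oblivious to \emph{which} $\epsilon$ coordinates the adversary perturbs, so Theorem~\ref{Theo:robust} holds uniformly over every attack constrained by $\|\delta_{\vec{x}_i}\|_0\le\epsilon$, not merely the four attacks singled out in Section~\ref{adv_smps}.
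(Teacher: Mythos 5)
Your proposal is correct and follows essentially the same route as the paper's own proof: both reduce to the single-hash collision identity $\Pr(h(\vec{x})=h(\vec{x}'))=1-\bar{d}_H(\vec{x},\vec{x}')$, raise it to the power $K$ (or $mK$ for LNH) by independence, and multiply by $L$ via linearity of expectation, with the paper phrasing the conclusion as a condition $K\leq\frac{\ln\Theta-\ln L}{\ln P_1}$ while you equivalently set $\Theta:=L(1-\epsilon/n)^K$. Your additional remarks (uniformity over all $\epsilon$-bounded perturbations, sampling with replacement only helping, and the $K$-versus-$L$ trade-off) are accurate refinements but do not change the argument.
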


\begin{proof}
Consider a sample $\vec{x} \in \{0,1\}^n$ and its perturbed version $\vec{x}' \in \{0,1\}^n$ with ${\parallel\vec{x}'-\vec{x}\parallel}_0\leq\epsilon$. We have $0 \leq \bar{h}_H(\vec{x}', \vec{x}) \leq \frac \epsilon n$. When instantiating $\mathcal{H}$ as ${\bf H}_{\text{LSH}}$ or ${\bf H}_{\text{LNH}}$, $\mathcal{H}$ consists of $L$ independent hashing transformations $g^K_i$ for $1 \leq i \leq L$, implying 
\begin{align*}
&\mathbf{E}[\#\{j\in T_{\mathcal{H}}:{\bf H}_j(\vec{x})={\bf H}_j(\vec{x}')\}]\\\nonumber
&= L \times  {\mathrm{Pr}}\left(g^K_i(\vec{x})=g^K_i(\vec{x}')\right) = L \times P_1^K, \nonumber 
\end{align*}
 
where
\[
P_1 = 
\begin{cases}
1-\bar{d}_H(\vec{x}', \vec{x}) & ~\mbox{if $\mathcal{H}$ is instantiated as ${\bf H}_{\text{LSH}}$}\\

[1-\bar{d}_H(\vec{x'}, \vec{x})]^{m} & ~\mbox{if $\mathcal{H}$ is instantiated as ${\bf H}_{\text{LNH}}.$} 
\end{cases}
\]
By observation, we know that $K \times \mathrm{ln}(P_1) \geq \mathrm{ln}(\Theta) - \mathrm{ln}(L)$ is equivalent to $L \times P_1^K \geq \Theta$, which is equivalent to Ineq. (\ref{geq:2Pr}).
That is, there exist $\mathcal{H}$ such that 
Ineq. (\ref{geq:2Pr}) holds if and only if $K\leq{\frac{\mathrm{ln}(\Theta) - \mathrm{ln}(L)}{\mathrm{ln}(P_1)}}$.
\end{proof}

Theorem \ref{Theo:robust} says that for 
a desired threshold value $\Theta$, there exist hashing transformations with proper choices of parameters $K$ and $L$ under which Ineq. (\ref{geq:2Pr}) holds, implying classification robustness against adversarial samples generated by small perturbations.

As mentioned above, it is difficult to prove Eq.~(\ref{eq:keeping}) and therefore we consider the following weaker result: for any
$\vec{x_1},\vec{x_2} \in \{0,1\}^n$, a hashing transformation ${\bf H}_j$ does not change much of the normalized Hamming distance $\bar{d}_H(\vec{x_1},\vec{x_2})$ in the transformed space, namely that there exists some ${\bf H}_j$ and $\kappa>0$ such that
\begin{equation}
\label{weaker-accuracy-condition}
\mathbf{E}[|\bar{d}_H({\bf H}_j(\vec{x_1}),{\bf H}_j(\vec{x_2})) - \bar{d}_H(\vec{x_1},\vec{x_2})|]\leq \kappa. 
\end{equation}
Note that Eq. \eqref{weaker-accuracy-condition} is weaker than Eq. \eqref{eq:keeping}. 

\begin{theorem}[HashTran-DNN classification accuracy]\label{Theo:accuracy}
There exist hashing transformations such that Eq. \eqref{weaker-accuracy-condition} holds.
\end{theorem}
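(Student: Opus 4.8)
The plan is to establish the claim for the concrete family $\mathcal{H}={\bf H}_{\text{LSH}}$ built from {\it bit sampling}; since Theorem~\ref{Theo:accuracy} only asserts \emph{existence}, this suffices, and it is the instantiation for which the normalized Hamming metric transfers most transparently to the hashed space. Fix $\vec{x_1},\vec{x_2}\in\{0,1\}^n$ and write $p=\bar{d}_H(\vec{x_1},\vec{x_2})$. I would consider a single transformation ${\bf H}_j=g^K_{\text{LSH}}=[h_1,\ldots,h_K]$ with each $h_t\leftarrow_R\{H_i\}_{i\in I}$ drawn independently (sampling with replacement). Because each $h_t$ returns the value of one uniformly chosen coordinate, the identity $\bar{d}_H(\vec{x_1},\vec{x_2})=1-\Pr(h_t(\vec{x_1})=h_t(\vec{x_2}))$ recorded in Section~\ref{sec:analysis} gives $\Pr(h_t(\vec{x_1})\neq h_t(\vec{x_2}))=p$ for every $t$, and the events $\{h_t(\vec{x_1})\neq h_t(\vec{x_2})\}_{t=1}^{K}$ are mutually independent.

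Next I would analyze the hashed distance as an unbiased estimator of $p$. Since $g^K_{\text{LSH}}(\vec{x})\in\{0,1\}^K$, the hashed normalized Hamming distance is
\[
\widehat{p}:=\bar{d}_H\bigl(g^K_{\text{LSH}}(\vec{x_1}),g^K_{\text{LSH}}(\vec{x_2})\bigr)=\frac1K\sum_{t=1}^{K}\mathbf{1}\!\left[h_t(\vec{x_1})\neq h_t(\vec{x_2})\right],
\]
an average of $K$ i.i.d.\ Bernoulli$(p)$ indicators, so $\mathbf{E}[\widehat{p}]=p$ and $\mathrm{Var}(\widehat{p})=p(1-p)/K\leq 1/(4K)$. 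Applying Jensen's inequality to the concave square-root function then yields
\[
\mathbf{E}\bigl[\,\bigl|\bar{d}_H(g^K_{\text{LSH}}(\vec{x_1}),g^K_{\text{LSH}}(\vec{x_2}))-\bar{d}_H(\vec{x_1},\vec{x_2})\bigr|\,\bigr]=\mathbf{E}\bigl[|\widehat{p}-p|\bigr]\leq\sqrt{\mathrm{Var}(\widehat{p})}\leq\frac{1}{2\sqrt{K}},
\]
so Eq.~\eqref{weaker-accuracy-condition} holds with $\kappa=1/(2\sqrt{K})$; in particular $\kappa$ is driven to $0$ by enlarging $K$, which also tightens the robustness budget through Theorem~\ref{Theo:robust}, where a larger $K$ shrinks $P_1^K$.

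I expect the genuinely delicate part to be any attempt to push the same statement through for the nonlinear family ${\bf H}_{\text{LNH}}$. There a single component $g^K_{\text{LNH},i}(\vec{x})$ is a concatenation of $K$ one-hot leaf-indicator blocks of length $2^{d-1}$, so the per-tree collision probability is $[1-\bar{d}_H(\vec{x_1},\vec{x_2})]^{m}$ (as used in the proof of Theorem~\ref{Theo:robust}) rather than $\bar{d}_H(\vec{x_1},\vec{x_2})$ itself, and the normalized Hamming distance between two one-hot blocks is no longer an unbiased estimator of $\bar{d}_H(\vec{x_1},\vec{x_2})$; one would have to relate $[1-p]^{m}$ back to $p$ and absorb the gap into $\kappa$, which is only controllable in a restricted regime of $m$ and $p$. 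Because the theorem is an existence statement, I would stop at the bit-sampling ${\bf H}_{\text{LSH}}$ construction and merely remark on the ${\bf H}_{\text{LNH}}$ case, consistent with the paper's stated position that Eq.~\eqref{weaker-accuracy-condition} is a deliberately weakened surrogate for the hard-to-prove guarantee Eq.~\eqref{eq:keeping}.
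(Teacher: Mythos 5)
Your proposal is correct, but it takes a genuinely different route from the paper. The paper's proof unifies LSH and LNH under the notation $h_j^m$ (with $h_j^1=h_j$ for bit sampling and $m>1$ for the decision trees), writes the hashed distance as $\frac{1}{K}\sum_{j=1}^{K}\bar{d}_{H,j}$, and then uses the triangle inequality on a doubling of $K$ to show
\[
\mathbf{E}\left[\left|\bar{d}_H-\tfrac{1}{2K}\textstyle\sum_{j=1}^{2K}\bar{d}_{H,j}\right|\right]\leq\mathbf{E}\left[\left|\bar{d}_H-\tfrac{1}{K}\textstyle\sum_{j=1}^{K}\bar{d}_{H,j}\right|\right],
\]
i.e.\ the expected metric distortion is non-increasing as $K$ is doubled, for either instantiation; no explicit $\kappa$ is produced and no convergence to zero is claimed. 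Your argument instead specializes to ${\bf H}_{\text{LSH}}$, observes that the hashed normalized Hamming distance is an average of $K$ i.i.d.\ Bernoulli$(p)$ indicators with $p=\bar{d}_H(\vec{x_1},\vec{x_2})$, and converts the variance bound $p(1-p)/K$ into $\mathbf{E}[|\widehat{p}-p|]\leq 1/(2\sqrt{K})$ via Jensen. This buys a concrete, vanishing $\kappa=1/(2\sqrt{K})$ and hence a quantitatively stronger conclusion than the paper's monotonicity statement (which, taken alone, establishes existence of $\kappa$ only trivially, since the distortion is bounded by $1$ in any case); what it gives up is coverage of ${\bf H}_{\text{LNH}}$, where, as you correctly note, the per-tree agreement probability is $[1-p]^m$ and the one-hot leaf encoding makes the hashed distance a biased estimator of $p$, so the same unbiasedness argument does not transfer. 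Since the theorem asserts only existence, restricting to bit-sampling LSH is legitimate, and your explicit remark on the LNH obstruction is a fair reflection of a gap the paper's unified treatment glosses over.
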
 

\begin{proof}
In order to unify the presentation, let us uniformly denote the functions by $h_j$ when using LSH to instantiate $\mathbf{H}_j$, by $\text{DT}^{m,d}_{i,j}$ when using LNH to instantiate $\mathbf{H}_j$, by $h_j^m$ such that $h^1_j=h_j$ indicates LSH, and by $h_j^{m}$ with $m>1$ indicating LNH.
The normalized Hamming distance between $g_i^K(\vec{x_1})$ and $g_i^K(\vec{x_2})$ is
\[
\frac{1}{mK}\sum_{j=1}^{K}d_H(h_j^m(\vec{x_1}),h_j^m(\vec{x_2})). 
\]
Let $\bar{d}_H=\frac{1}{n}d_H(\vec{x_1},\vec{x_2})$ and $\bar{d}_{H,j}=\frac{1}{m}d_H(h_j^m(\vec{x_1}),h_j^m(\vec{x_2}))$. By doubling the value of parameter $K$ in the case of LSH $h_j$ or the case of LNH $\text{DT}^{m,d}_{i,j}$, we have 
\begin{align*}
&\quad\,\,\mathbf{E}\left[\left|\bar{d}_H - \frac{1}{2K}\sum_{j=1}^{2K}\bar{d}_{H,j}\right|\right] \\
&=\mathbf{E}\left[\left|\frac{1}{2}(\bar{d}_H-\frac{1}{K}\sum_{j=1}^{K}\bar{d}_{H,j}) 
+\frac{1}{2}(\bar{d}_H-\frac{1}{K}\sum_{j=K+1}^{2K}\bar{d}_{H,j})\right|\right]\\
&\leq\frac{1}{2}\mathbf{E}\left[\left|\bar{d}_H-\frac{1}{K}\sum_{j=1}^{K}\bar{d}_{H,j}\right|\right] 
+ \frac{1}{2}\mathbf{E}\left[\left|\bar{d}_H-\frac{1}{K}\sum_{j=K+1}^{2K}\bar{d}_{H,j}\right|\right] \\
&=\mathbf{E}\left[\left|\bar{d}_H-\frac{1}{K}\sum_{j=1}^{K}\bar{d}_{H,j}\right|\right].
\end{align*}
This leads to Eq. \eqref{weaker-accuracy-condition} and the theorem follows. 
\end{proof}

\section{Implementation and Evaluation} \label{sec:evl}

In this section we report our implementation of HashTran-DNN and evaluate its effectiveness via standard metrics (see, e.g., \cite{Pendleton:2016}) that include the classification accuracy (Acc), the False-Positive Rate (FPR), and the False-Negative Rate (FNR). 
 
\subsection{Dataset and feature extraction} 
We use an Android malware dataset that was collected from the Koodous Android malware analysis platform~\cite{Koodous:Online}. This dataset contains 49,829 Android malware samples and 48,406 benign Android samples. We treat the malware samples as non-adversarial samples. We split the dataset into three disjoint sets: a {\em training} set of 78,588 samples (including 39,914 malware samples and 38,674 benign samples), a {\em validation} set of 4,912 samples (including 2,475 malware samples and 2,437 benign samples), and a {\em testing} set of 14,735 samples (including 7,487 malware samples and 7,248 benign samples). 

In order to extract features of the samples, we use the Androguard \cite{Androguard:Online} to unpack Android Packages (APKs). We use the following kinds of static Android features: 
{\bf (i)} Permissions requested by an application (e.g., \textit{ android.permission.SEND\_SMS}). 
{\bf (ii)} Features indicating the application of hardware (e.g., \textit{android.hardware.wifi}).
{\bf (iii)} Names of application components including \textit{activity}, \textit{service}, \textit{broadcast receiver} and \textit{provider}.
{\bf (iv)} Intents {\it intent-filter} used to communicate with each other.
{\bf (v)} Permissions actually used for calling Application Programming Interface (API)~\cite{Daniel:NDSS}. These features are also considered by previous studies for Android malware detection \cite{Daniel:NDSS,Sayfullina7345283, grossePM0M16}.
The first four kinds of features can be extracted from the {\it AndroidManifest.XML} file, 
and the last kind of features
can be extracted by analyzing the disassembled code. The feature space is very large, containing 519,550 features in total. We propose reducing the feature space by removing the low-frequency features, which are the features that only occasionally appear in the samples. After removing the features with low-frequency ($<15$), we obtain 13,596 features in total.

\subsection{Waging the four attacks}
Recall that we use the $\ell_0$-norm to bound the degree of perturbation, namely
$$d_H(\vec{x},\vec{x}')={\Vert{\vec{x}'-\vec{x}}\Vert}_0\leq\epsilon,$$ 
where $\epsilon$ bounds the number of features that are perturbed. 
The {\em objective} of the attacker is to manipulate a malware sample $\vec{x}$ with $C(\vec{x})=1$ to an adversarial sample $\vec{x'}$ such that $C(\vec{x'})=0$.
In order to wage the four attacks (reviewed in Section \ref{adv_smps}), the attacker needs a surrogate DNN classifier helping generate adversarial examples. Since the attacker knows the training dataset and the feature set, 
the attacker can train its own surrogate model ~\cite{papernot_2016,mengc_2017}. In our experiments, we train (on behalf of the attacker) a surrogate DNN classifier with three hidden layers, where the first layer has 4,096 neurons, the second layer has 512 neurons, and the third layer has 32 neurons. In each of the four attack experiments, there are two steps: {\em selecting features to perturb}, {\em perturbing the selected features}, and {\em validating the perturbations}, which are elaborated below.

\subsubsection{Selecting features to perturb} 
In order to perturb malware samples while preserving malicious functionalities, we make the following observations. 
\begin{itemize}
\item We cannot delete or replace objects in the {\it AndroidManifest.xml} file because of the following two reasons. {\bf (i)} This file declares the objects (e.g., requested permission, application components and hardware) that must be declared to the Android operation system; otherwise, these objects and their associated functions will be ignored. {\bf (ii)} The declared application components are the public API that may be used by the other Android apps or the Android system, meaning that deleting or replacing them may crash the other apps or the Android system. 
\item Malware may require extra permissions, hardware resources, and unnecessary components from the Android system, suggesting us to insert {\it permission} requirement, {\it hardware} requirement, {\it activity}, {\it service} and {\it broadcast receiver} into {\it AndroidManifest.xml}. 
\end{itemize}
The preceding observations suggest us to insert some features into malware samples to generate adversarial samples.
Pertinent to the dataset, 11,863 (among the 13,596) features can be perturbed. In other words, any of these 11,863 features that is absent in a malware sample may be inserted by the four attack methods reviewed in Section~\ref{adv_smps}. 
Additional care needs to be taken for waging the CW and Mimicry attacks.

For waging the 
CW attack, which cannot be directly applied to the binary feature space---the context of the present paper. Therefore, we propose using the following variant of the CW attack. We use
\begin{align}
{\vec{x}'}&=\max(\textrm{clip}(\vec{x}'),\vec{x}) \circ \vec{v} + \vec{x} \circ (1 - \vec{v}) \nonumber\\ 
&= \max(\min(\max(\vec{x}',0),1), \vec{x}) \circ \vec{v} + \vec{x} \circ (1 - \vec{v}) \nonumber
\end{align}
to guide the perturbation when minimizing the loss function given by Eq. \eqref{eq:cw}, where $\max(\cdot,\cdot)$ and $\min(\cdot,\cdot)$ are respectively the element-wise maximum and minimum operation, and $\vec{v}$ ensures that the perturbation do not disrupt the malicious functionality (i.e., only the 11,863 features can be inserted). Once we obtain the optimization result, we use the nearest neighbor in the discrete space as $\vec{x'}$, while obeying the constraint that only the 11,863 features can be inserted.

For waging the Mimicry attack, we propose using the following heuristic to increase the effectiveness of the Mimicry attack.
Specifically, we use 60 benign samples to guide the perturbation of a single malware sample, leading to 60 adversarial samples; then, we select the adversarial sample (among the 60) that causes the lowest classification accuracy to the attacker's surrogate model (i.e., accommodating the worst-case scenario). 

Each of the four attack methods selects its own set of features to perturb or insert. The number of perturbed features in the JSMA attack, the GD-KDE attack, and the CW attack is bounded from above by the perturbation parameter $\epsilon$, which is the number of features that can be perturbed (pertinent to the context of binary feature representation). Nevertheless, the Mimicry attack implies that the number of perturbed features is not bounded because it mimics a benign sample as much as possible and parameter $\epsilon$ is not applicable in this case.

\subsubsection{Perturbing the selected features}
In the four attack experiments, we use the same set of 500 malware samples that are randomly select 
from the malware samples in the dataset; these 500 malware samples are all classified by the attacker's surrogate DNN model as malicious.
In each attack experiment, we insert into the malware samples the features that are respectively selected by the attack in question (as described in the previous step), and then we re-package the modified APK into adversarial samples.
In our experiment, we use the disassembly and repackage tool known as Apktool \cite{apktool:Online} and use the ElementTree API~\cite{ElementTree:Online} to modify the {\it AndroidManifest.xml} file for inserting those selected features. In each attack experiment, we successfully perturb  496 (among the 500) malware samples, while noting that the Apktool fails to disassemble the other 4 malware samples.
This means that we generate 496 adversarial samples in each of the four attacks.

\subsubsection{Validating the perturbations} 
In order to validate that the 496 adversarial samples are still malware, we use the dynamical malware analysis tool known as CuckooDroid sandbox~\cite{cuckoodroid:Online} to execute them.
For the sake of efficiency, we randomly select 23 adversarial examples from the 496 adversarial samples and confirm their maliciousness as follows. We use a sandbox to install the adversarial samples in an android emulator, monitor the adversarial sample's execution \cite{droidmon:Online}, and submit the adversarial sample to Virustotal~\cite{VirusTotal:Online}. These 23 adversarial samples are all deemed as malicious because each sample is detected by at least 10 detectors as malicious, which is perhaps acceptable according to a recent study on the trustworthiness of VirusTotal \cite{DBLP:journals/tifs/DuSCCX18}. 

\subsection{Evaluating the effectiveness of the HashTran-DNN defense}

The evaluation is centered at answering the following three Research Questions (RQ):
\begin{itemize}
\item RQ1: How effective is HashTran-DNN?
\item RQ2: How robust is HashTran-DNN when compared with other deep learning-based defense methods?
\item RQ3: What contributes to the effectiveness of HashTran-DNN? The answer to this question may be seen as a first step towards answering the much more difficult open problem: Why deep learning is effective? 
\end{itemize}

\subsubsection{RQ1: How effective is HashTran-DNN?}

For answering RQ1, we set $\epsilon=10$ for the JSMA, GD-KDE, and CW attacks, meaning that at most 10 features are perturbed.
Since HashTran-DNN uses hashing transformations and possibly a DAE, we consider four combinations in terms of the  $\mathcal{H}$ instantiation (as LSH or LNH) and whether or not to use DAE, namely: {\bf (i)} HashTran-DNN with LSH, where LSH is derived from the bit sampling method; {\bf (ii)} HashTran-DNN with LNH, where LNH is derived from Algorithm \ref{LFH_alg}; {\bf (iii)} HashTran-DNN with LSH-DAE; {\bf (iv)} HashTran-DNN with LNH-DAE. 

For hyper-parameters, we focus on tuning the following:
\begin{itemize}
\item $K$: the number of hash function $h_j$ or ${\textrm DT}^{m,d}_{i,j}$ respectively in $g^K_{\rm LSH}$ or $g^K_{\rm LNH}$, where $1 \leq j \leq K$.
\item $L$: the number of $g^K_{\rm LSH}$ or $g^K_{\rm LNH}$ respectively in $\mathbf{H}_{\text{LSH}}$ or $\mathbf{H}_{\text{LNH}}$.
\end{itemize}
For training the four kinds of HashTran-DNN models, 
we use the Adam optimization method with batch size 128, epochs 30, dropout rate 0.4, and learning rate 0.001. We select the model that achieves the highest classification accuracy on the afore-mentioned validation set of 4,912 samples. 
The other hyper-parameters are selected as follows.

\begin{table}[!htbp]
\caption{Acc (accuracy) of HashTran-DNN with LSH when applied to the {testing set of 14,735 samples (containing no adversarial samples, the 2nd column) and when applied to the 496 adversarial samples respectively generated by the four attacks (the 3rd to 6th columns), under different choices of $m$.}}
		\centering
		\begin{tabular}{ccccccc}
			\hline
			 {\multirow{2}{*}{$m$}}&{\multirow{2}{*}{Acc (\%)}}&\multicolumn{4}{c}{Acc (\%) under attack}\\\cline{3-6}
			 &&JSMA&GD-KDE&CW&Mimicry\\ 
            \hline
             15&91.30&68.64&81.72&84.17&68.05 \\
            \hline
             30&91.40&71.79&92.52&91.32&65.35 \\
            \hline
             60&91.54&90.43&92.89&91.88&69.56 \\
            \hline
             100&91.75&93.97&93.50&93.92&74.41 \\
            \hline
             116&91.71&93.55&93.72&95.14&74.05 \\
		    \hline
		\end{tabular}
		\label{tab:analysis_m_prv}
\end{table}

For training a model of HashTran-DNN with LNH, we set the height of the Decision Trees as $d=4$.
For selecting $m$, 
we set $(K,L)=(32,64)$ and vary $m$ from $2^d - 1$ to $\sqrt{n}$, where $n$ is the dimension of the feature space. Table~\ref{tab:analysis_m_prv} summaries the results, and shows that $m=15$ leads to the lowest classification accuracy of the HashTran-DNN model against the JSMA, GD-KDE, and CW attacks. 
As $m$ increases, the classification accuracy with respect to the original testing set (containing no adversarial samples) steadily increases, albeit slightly (the 2nd column). However, the classification accuracy against the 496 adversarial samples, which are respectively generated by the four attacks (the 3rd to 6th columns), varies substantially.
These observations suggest us to set  $m\approx\sqrt{n}$.

\begin{figure}[!htbp]
  \begin{subfigure}[b]{0.245\textwidth}
    \includegraphics[width=\textwidth]{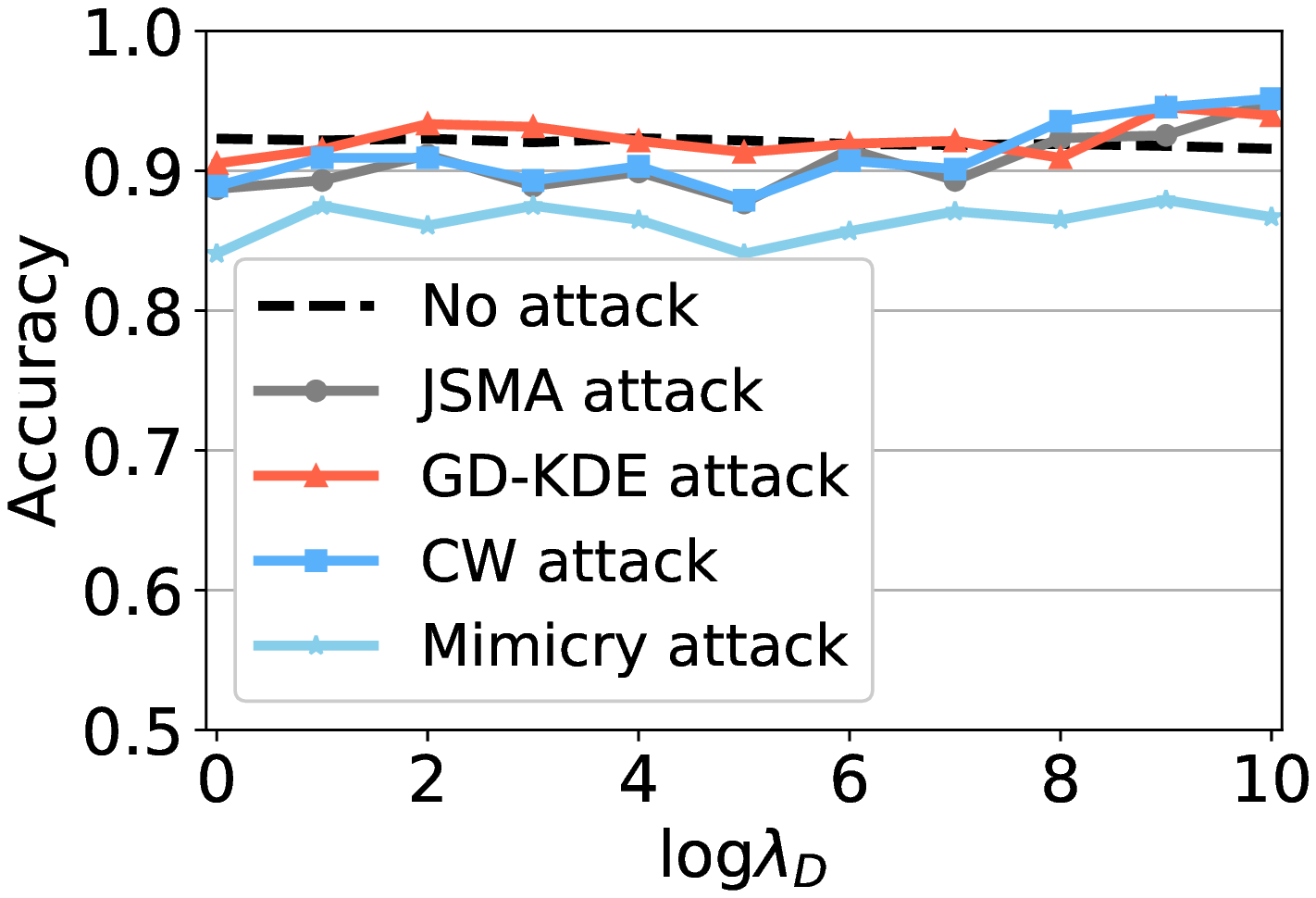}
    \caption{{\footnotesize HashTran-DNN w/ LSH-DAE}}
    \label{fig:lsh-dae-lam}
  \end{subfigure}
  \begin{subfigure}[b]{0.23\textwidth}
    \includegraphics[width=\textwidth]{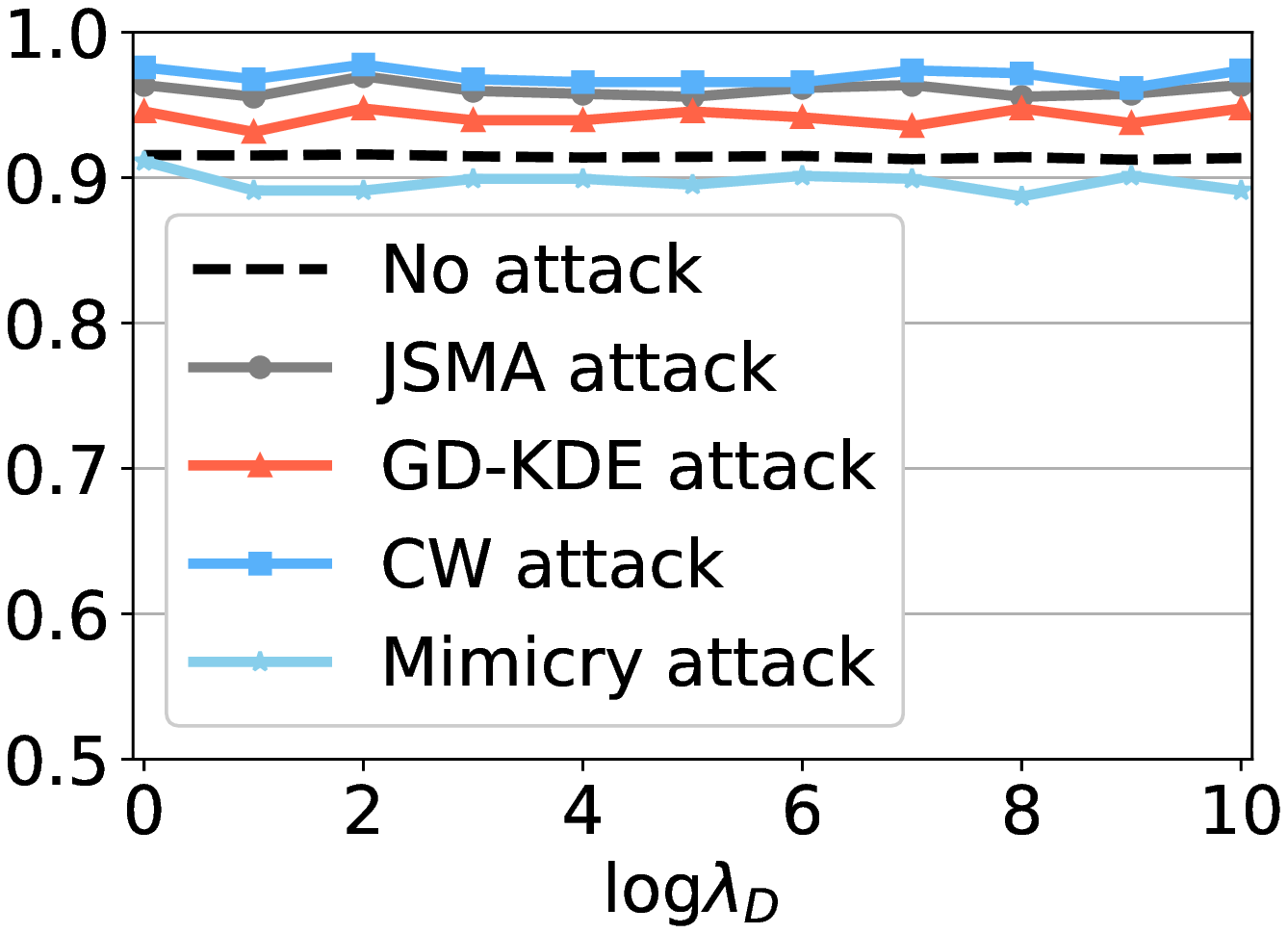}
    \caption{{\footnotesize HashTran-DNN w/ LNH-DAE}}
    \label{fig:lnh-dae-lam}
  \end{subfigure}
\caption{Accuracies of HashTran-DNN with LSH-DAE and HashTran-DNN with LNH-DAE when applied to the testing set of 14,735 samples (containing no adversarial samples) and when 
applied to the 496 adversarial samples that are respectively generated by the four attacks, under different choice of $\lambda_D$.}
  \label{fig:lambda-d}
\end{figure}

For training HashTran-DNN with LSH-DAE and HashTran-DNN with LNH-DAE, we sample the probability for noise injection from $\max(0,\mathcal{N}(0,{(\left.{10}\middle/{n} \right.)}^2))$, where $\mathcal{N}(\cdot,\cdot)$ is the Gaussian Distribution. We set the threshold $t_r$ to make 99.9\% of the validation set of 4,912 samples 
pass the DAE-based detector. We select $\lambda_D$ as follows.
\begin{itemize}
\item In the case of HashTran-DNN with LSH-DAE, we set $(K,L)=(128,128)$ and vary $\lambda_D$ from 1 to 1024 exponentially.
From Figure \ref{fig:lsh-dae-lam}, we observe that the accuracy against the JSMA, GD-KDE and CW attacks increases with $\lambda_D$, while noting that the accuracy on the testing set {of 14,735 samples (containing no adversarial samples)} drops slightly. This suggests us to select $\lambda_D=256$ for HashTran-DNN with LSH-DAE. 
\item In the case of HashTran-DNN with LSH-DAE, we set $(K,L)=(32,32)$ and vary $\lambda_D$ from 1 to 1024 exponentially. Figure \ref{fig:lnh-dae-lam} shows that the accuracy under each attack varies slightly with $\lambda_D$. Therefore, we select $\lambda_D=1$.
\end{itemize}

\begin{table*}[!htbp]
\caption{HashTran-DNN Acc (accuracy) without using adversarial samples (testing set of 14,735 samples, the 3rd column) vs. using 496 adversarial malware samples respectively (the 6th to 9th columns) under different hyper-parameters $(K,L)$.}
		\centering
		\begin{tabular}{c|c|ccc|ccc|c}
			\hline
\multirow{2}{*}{Classifiers}&{\multirow{2}{*}{$(K,L)$}}&{Acc (\%)}&{FNR (\%)}&{FPR (\%)}&\multicolumn{3}{c|}{Acc (\%) with $\epsilon$=10} & {Acc (\%)} \\\cline{3-9}
&& \multicolumn{3}{c|}{absence of adversarial samples}&JSMA attack&GD-KDE attack&CW attack&Mimicry attack \\
             \hline
	\multirow{7}{*}{HashTran DNN w/ LSH} &(32,64)&88.25&18.82&4.42&75.87&70.26&67.16&50.47\\
    					 &(64,64)&91.15&13.87&3.67&81.56&76.62&70.09&49.89\\
                         &(32,128)&91.13&14.24&3.35&86.00&81.11&70.56&52.26\\
    					 &(64,128)&91.79&13.89&2.35&71.93&82.17&77.48&51.48\\
    					 &(128,128)&92.19&12.46&2.83&64.51&73.02&68.83&52.85\\
                         &{(128,256)}&92.11&13.22&2.31&64.12&60.69&71.23&52.93\\
                         &(256,256)&92.61&12.87&1.74&65.99&59.35&65.52&47.71\\
               \hline
	\multirow{3}{*}{HashTran DNN w/ LNH} &(32,32)&91.69&14.01&2.46&90.76&94.31&93.16&73.74\\
                         &{(32,64)}&91.71&14.07&2.35&93.55&93.72&95.15&74.05\\
                         &(64,64)&91.54&14.24&2.45&95.93&93.92&95.76&79.66\\
				\hline
	\multirow{3}{*}{HashTran DNN w/ LSH-DAE} &(128,128)&91.92&13.41&2.57&92.34&90.93&93.55&86.49 \\
							 &{(128,256)}&92.28&12.54&2.75&94.35&89.92&92.14&88.31\\
							 &(256,256)&92.50&12.46&2.37&91.73&89.52&87.50&87.10 \\
				\hline
	\multirow{3}{*}{HashTran DNN w/ LNH-DAE} &(32,32)&91.56&13.73&2.98&96.37&94.56&97.58&91.13 \\
							 &(32,64)&{91.73}&13.62&2.73&96.98&95.16&97.18&92.94 \\
							 &{(64,64)}&91.65&15.12&1.37&96.17&94.15&96.57&87.70 \\
				\hline
		\end{tabular}
		\label{tab:analysis}
\end{table*}

Table~\ref{tab:analysis} summarizes the evaluation result.
We make the following observations.
First, for HashTran-DNN with LSH, the classification accuracy increases with $K$ (the number of hash function in $g^K$),
which confirms Theorem \ref{Theo:accuracy} in Section \ref{sec:analysis}, namely that increasing $K$ can improve the classification accuracy against non-adversarial malware samples. When a higher accuracy ($\geq 91.79\%$) is achieved {against the testing set containing no adversarial samples}, the accuracy drops substantially against any of the JSMA, GD-KDE, and CW attacks.
Second, HashTran-DNN with LNH, although achieving a slightly lower classification accuracy in some cases than HashTran-DNN with LSH against the testing set that contains no adversarial examples, the former achieves a much higher classification accuracy ($\geq 90.76\%$) than the latter against the JSMA, GD-KDE, and CW attacks as well as a 20\% increase in the classification accuracy against the Mimicry attack. This can be attributed to the fact that the former treats features more equally than the latter. 
Third, using DAE can further improve the classification accuracy against the four attacks. This is especially true for 
HashTran-DNN with LNH-DAE, which increases, for example,
the classification accuracy against the CW attack from 
68.83\% to 93.55\%.

\begin{insight}
\label{insight-RQ1}
HashTran-DNN with LNH can effectively defend against the JSMA, GD-KDE, and CW attacks.
Moreover, 
HashTran-DNN with LNH-DAE can effectively defend against all of the four attacks.
\end{insight}

\subsubsection{RQ2: How robust is HashTran-DNN when compared with other defense methods?}

We compare HashTran-DNN with the RFN~\cite{wang_2017} and iterative Adversarial Training defense methods~\cite{DBLP:journals/corr/KurakinGB16a}  reviewed in Section \ref{sec:review-defense-methods}.
We conduct five experiments: (i) Standard DNN; (ii) the RFN defense; (iii) the Adversarial Training defense; (iv) HashTran-DNN with LSH-DAE; and (v) HashTran-DNN with LNH-DAE (noting that the last two are chosen because they are respectively more effective than HashTran-DNN with LSH and HashTran-DNN with LNH). In each experiment, we consider five scenarios: the testing set of 14,735 samples (containing no adversarial samples) as well as the 496 adversarial samples that are respectively generated by the JSMA, GD-KDE, CW, and Mimicry attacks. In order to see the impact of the degree $\epsilon$ of perturbation, we consider $\epsilon=10,20,30$ for the JSMA, GD-KDE, and CW attacks (while recalling that $\epsilon$ is not applicable to the mimicry attack).

\begin{table*}[!htbp] 
	\caption{Hyper-parameters used in the experiments for answering RQ2. 
    }
	\centering
	\begin{tabular}{cccccccc}
	  \hline
		\multirow{2}{*}{Defense} & \multicolumn{7}{c}{Hyper-parameters} \\\cline{2-8}
		                                       &{DNN architecture}&{Activation}&{Optimizer}&{Learning rate}&{Dropout rate}&{Batch size}&{Epoch}\\
		\hline
		{No defense (standard DNN)}&{4096-512-32-2}&{Relu}&{Adam}&{0.001}&{0.4}&{128}&{30}\\
		\hline\hline
		{RFN}&{4096-512-32-2}&{Relu}&{Adam}&{0.001}&{0.4}&{128}&{30}\\
		\hline
		{Adversarial Training}&{4096-512-32-2}&{Relu}&{Adam}&{0.001}&{0.4}&{128}&{30}\\
		\hline\hline
		{HashTran-DNN w/ LSH-DAE}&{256,128-512-32-2}&{Relu}&{Adam}&{0.001}&{0.4}&{128}&{30}\\
		\hline
		{HashTran-DNN w/ LNH-DAE}&{64,128-512-32-2}&{Relu}&{Adam}&{0.001}&{0.4}&{128}&{30}\\
		\hline
	\end{tabular}
	\label{tab:hp}
\end{table*}

Table~\ref{tab:hp} summarizes the resulting neural network structures and hyper-parameters, while the other parameters are described as follows.
We select the nullification rate in the RFN defense by sampling from the Gaussian Distribution $\mathcal{N}(0.3,0.05^2)$, while making the adversarial training penalize the adversarial spaces {searched by} the JSMA attack method ($\epsilon=$10) iteratively. According to Table~\ref{tab:analysis}, we set $(K,L)=(128,256)$ and $(K,L)=(32,64)$ for HashTran-DNN with LSH-DAE and HashTran-DNN with LNH-DAE, respectively. We select the model that 
achieves the highest classification accuracy against the validation set of 4,912 samples (containing no adversarial samples).

\begin{table*}[!htbp] 
	\caption{Classification accuracy against the testing set of 14,735 samples containing no adversarial samples (No attack) and the 496 adversarial samples respectively generated by the JSMA, GD-KDE, CW, and Mimicry attacks.} 
	\centering
	\begin{tabular}{c|c|ccc|ccc|ccc|c}
	  \hline
		\multirow{2}{*}{Defense}&\multirow{2}{*}{No attack}&\multicolumn{3}{c|}{JSMA attack}&\multicolumn{3}{c|}{GD-KDE attack}&\multicolumn{3}{c|}{CW attack}&\multirow{2}{*}{Mimicry} \\\cline{3-11}
&&{$\epsilon=10$}&{$\epsilon=20$}&{$\epsilon=30$}&{$\epsilon=10$}&{$\epsilon=20$}&{$\epsilon=30$}&{$\epsilon=10$}&{$\epsilon=20$}&{$\epsilon=30$}& \\
		\hline
{No defense (standard DNN)}&{92.50}&{58.40}&{41.71}&{35.13}&{51.11}&10.88&0.614&57.34&38.70&5.552&13.00\\
		\hline\hline
		 {RFN}&91.93&{68.72}&{45.81}&{40.07}&{68.42}&37.67&12.17&70.87&53.63&35.71&26.14\\	
		\hline
		 {Adversarial Training}&92.18&{98.94}&{99.78}&{100.0}&{98.94}&{99.58}&100.0&98.74&99.78&100.0&85.41\\														
		\hline\hline
		 {HashTran-DNN w/ LSH-DAE}&92.28&{94.35}&{100.0}&{100.0}&{89.92}&94.56&96.17&92.14&98.59&99.80&88.31\\
			\hline
		 {HashTran-DNN w/ LNH-DAE}&91.73&96.98&96.77&96.77&{95.16}&{94.56}&{93.55}&97.18&95.56&93.35&92.94\\
		\hline
	\end{tabular}
	\label{tab:cmp}
\end{table*}

Table \ref{tab:cmp} summarizes the experimental result. We make the following observations. First, the standard DNN is vulnerable to the four attacks, and the higher the perturbation (when applicable), the lower the classification accuracy. Second, the four defense methods incur no significant side-effects in the absence of adversarial examples, meaning that they can be used even if the attacker does not launch adversarial samples. This matter is important in dealing with the uncertainty that in the real world, the defender does not know for certain when the attacker will launch adversarial samples. Third, the RFN defense cannot effectively defend against any of the four attacks. The adversarial training defense is highly effective against the JSMA, GD-KDE, and CW attacks, but not very effective against the Mimicry attack. 
In contrast, HashTran-DNN with LSH-DAE and HashTran-DNN with LNH-DAE achieve a classification accuracy that is comparable to what is achieved by the adversarial training defense against the JSMA, GD-KDE, and CW attacks (above 92.14\%, except for the GD-KDE attack with $\epsilon=10$, which leads to a 89.92\% classification accuracy). Moreover, both HashTran-DNN defense methods can more effectively defend against the Mimicry attack than the adversarial training defense, with a 7.53\% increase in the case of HashTran-DNN with LNH-DAE. 
{We reiterate that this effectiveness is achieved without using adversarial samples to train the HashTran-DNN models.}

\begin{insight}
Standard DNNs can be ruined by adversarial malware samples. RFN is not effective against any of the four attacks. 
Adversarial Training is effective against the JSMA, GD-KDE, and CW attacks, but not effective against the Mimicry attack. HashTran-DNN, while not using adversarial samples in training, is effective against the four attacks.
\end{insight}

\subsubsection{RQ3: What contributes to the effectiveness of HashTran-DNN?}

\begin{figure*}[!htbp]
\centering
  \begin{subfigure}[b]{0.25\textwidth}
    \includegraphics[width=\textwidth]{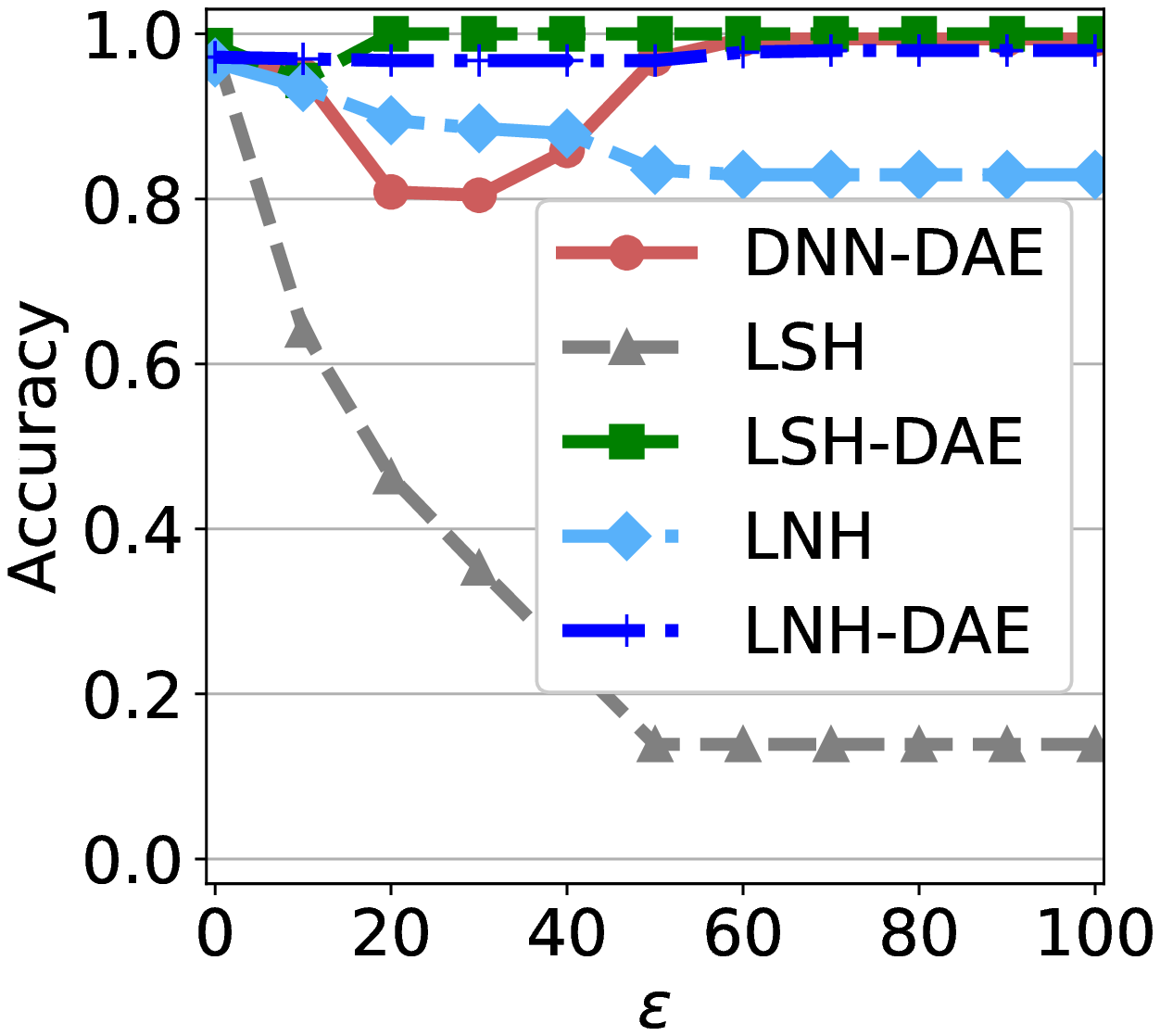}
    \caption{JSMA attack}
    \label{fig:jsma}
  \end{subfigure}
  \begin{subfigure}[b]{0.235\textwidth}
    \includegraphics[width=\textwidth]{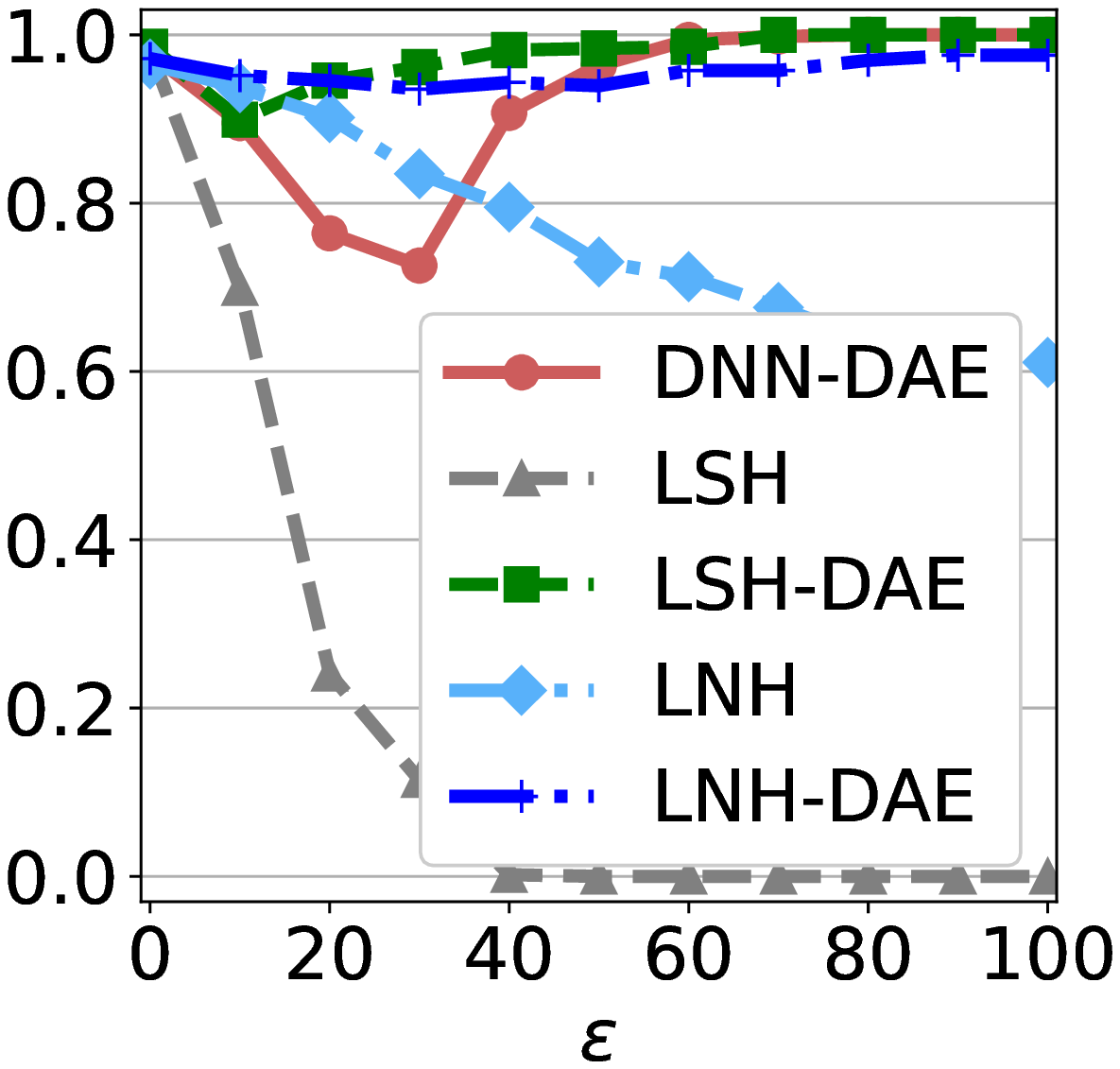}
    \caption{GD-KDE attack}
    \label{fig:gdkde}
  \end{subfigure}
  \begin{subfigure}[b]{0.235\textwidth}
    \includegraphics[width=\textwidth]{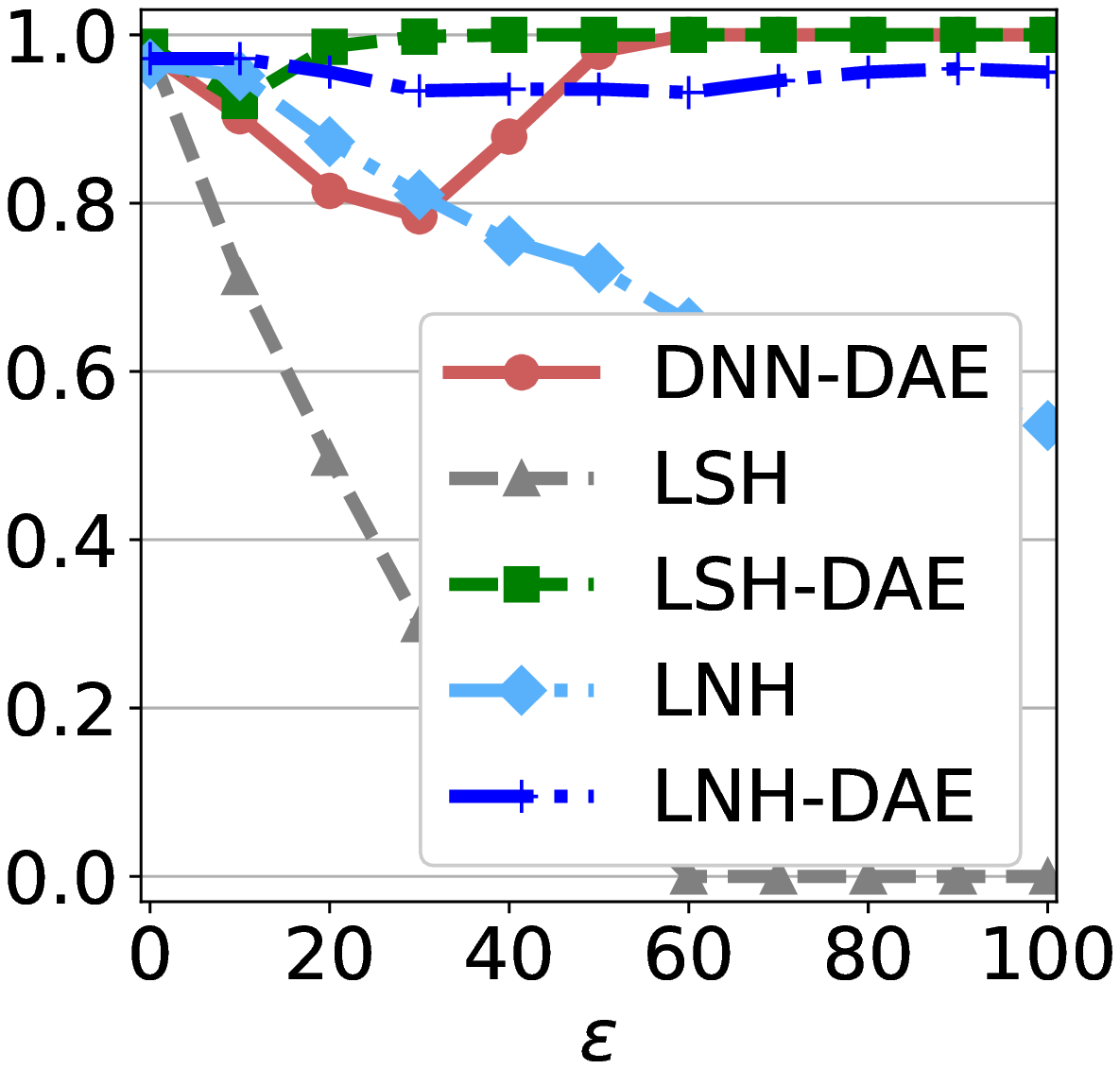}
    \caption{CW attack}
    \label{fig:cw}
  \end{subfigure}
  \begin{subfigure}[b]{0.225\textwidth}
    \includegraphics[width=\textwidth]{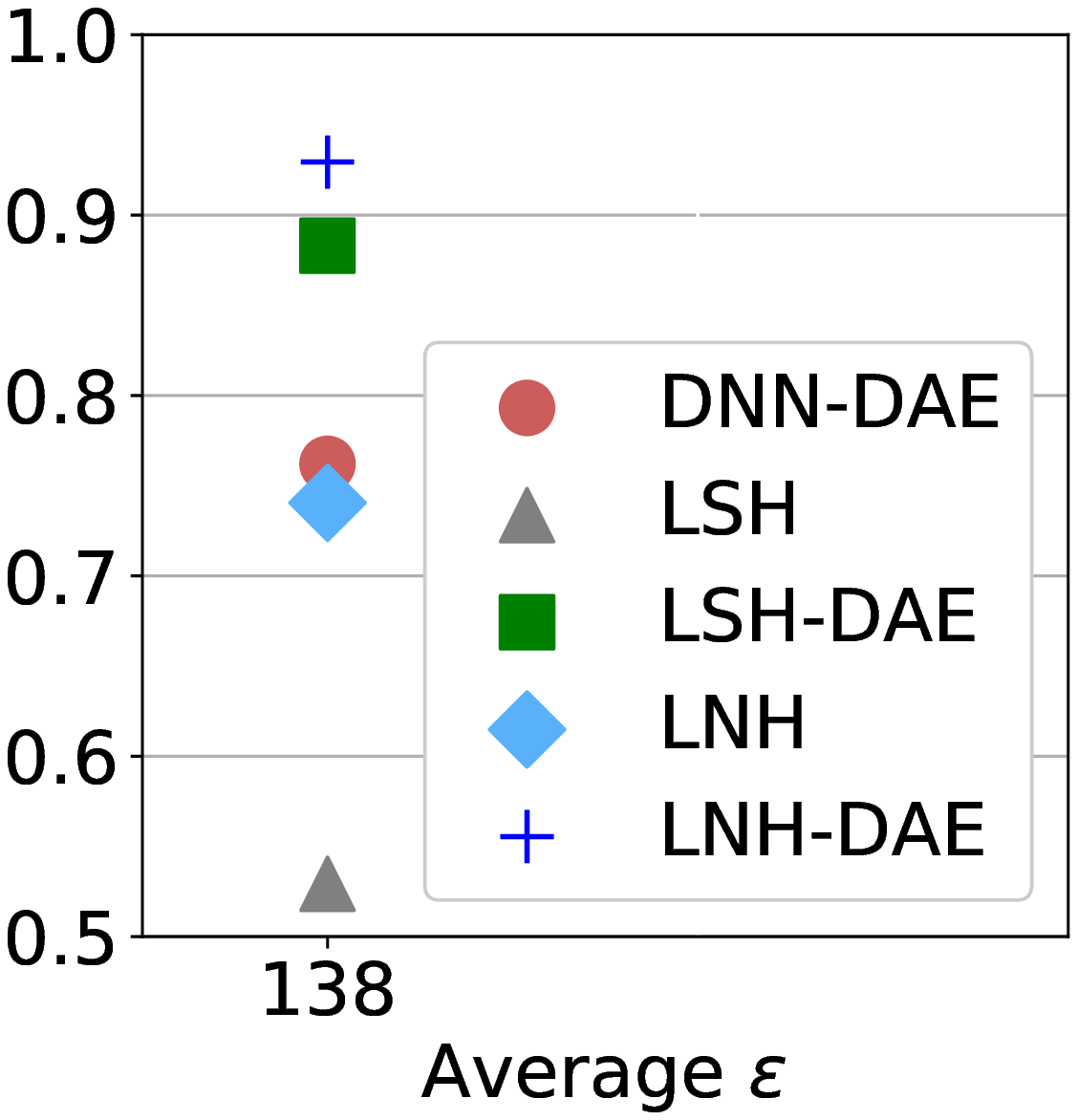}
    \caption{Mimicry attack}
    \label{fig:mimicry}
  \end{subfigure}
	\caption{Classification accuracy against the 496 adversarial samples that are respectively generated by the JSMA, GD-KDE, CW, and Mimicry attacks with different degrees of perturbations, while noting that for the Mimicry attack, the degree of perturbation is not an input parameter but averaged over the actual perturbations (because $\epsilon$ is not applicable). 
    }
  \label{fig:hashTran-work}
\end{figure*}

It is an open problem to explain the effectiveness of deep learning models. Nevertheless, we can at least get some insights into the effectiveness of HashTran-DNN. For this purpose, we conduct four experiments corresponding to the four attacks. In each experiment, we consider five DNN models: (i) Standard DNN with DAE, denoted by DNN-DAE; (ii) HashTran-DNN with LSH; (iii) HashTran-DNN with LNH; (iv) HashTran-DNN with LSH-DAE; and (v) HashTran-DNN with LNH-DAE. The idea is that by comparing the classification accuracy of (i), which does not using the hash representation, and that of (ii)-(iii), which does use the hash representation, we can observe the contribution of the hash representations to the classification accuracy. 

In the DNN-DAE experiment, the hyper-parameters are the same as the HashTran-DNN with LSH-DAE experiment. For fair comparison, we train a DNN-DAE model, which achieves a 92.09\% accuracy on the testing set of 14,735 samples (contains no adversarial samples), while noting that this accuracy is comparable to that of the standard DNN model (92.50\%). In the experiments of HashTran-DNN with LSH and HashTran-DNN with LSH-DAE, we set $(K,L)=(128,256)$ because as shown in Table~\ref{tab:analysis}, this combination leads to a good classification accuracy on the original testing set (containing no adversarial samples). In the experiments of HashTran-DNN with LNH and HashTran-DNN with LNH-DAE, we set $(K,L)=(32,64)$ because as shown in Table~\ref{tab:analysis}, this combination leads to the highest classification accuracy on the original testing set (containing no adversarial samples).

Figure \ref{fig:hashTran-work} plots the classification accuracy of the five experiments mentioned above.
We make the following observations.
First, the contribution of LSH and LNH hashing to the classification accuracy decreases as the degree of perturbation increases in the JSMA, GD-KDE, and CW attacks.
Second, the use of DAE can offset the incapability of hashing in coping with a high degree of perturbation. This can be attributed to the fact that the DAE can filter testing samples that are far away from the distribution of the training samples.
Third, there is a substantial drop in the classification accuracy of the DNN-DAE model against the 496 adversarial samples that are respectively generated by the JSMA, GD-KDE, and CW attacks with perturbation bound at $\epsilon =30$. However, the classification accuracy increase when $\epsilon$ increases above $\epsilon =30$. This phenomenon is not exhibited by the HashTran-DNN with LSH-DAE and HashTran-DNN with LNH-DAE. We attribute this discrepancy to the following: On one hand, the effectiveness of DAE against {\em small}, but not {\em large}, perturbations (i.e., the adversarial samples are close to the distribution of the training set) is known for its instability \cite{gur_2014}, 
explaining the phenomenon exhibited by the DNN-DAE model. On the other hand, the instability of DAE is eliminated by the HashTran-DNN with LSH-DAE and HashTran-DNN with LNH-DAE because the hashing transformation (or the hash representation) regularizes the corresponding HashTran-DNN to capture the locality information in the latent space.

\begin{insight}
The effectiveness of HashTran-DNN in detecting adversarial malware examples comes from two aspects: the hashing transformation, which helps DNNs cope with small perturbations, and the DAE, which regularizes DNN and filters large perturbations.
\end{insight}

\section{Limitations} \label{sec:lmt}
\label{sec:limitations}

The present study has several limitations.
{\bf (i)} The framework considers DNN only, meaning that it needs to be extended to accommodate other kinds of deep learning models. This is by no means straightforward.
{\bf (ii)} Our instantiations of HashTran-DNN consider two hashing transformations, namely LSH and LNH, while recalling that the specification of LNH is constructive rather than an explicit expression. Future research needs to define LNH explicitly and consider possibly other hash functions that can lead to better results.
{\bf (iii)} The present study focuses on {\em gray-box} attacks. Future research needs to investigate more rigorous {\em white-box} attacks, in which all parameters are exposed to the attacker.
{\bf (iv)} The present study focuses on malware classification, which is our interest and original motivation. It is interesting to investigate how the HashTran-DNN framework may be applied to image processing and other application domains.

\section{Conclusion} \label{sec:conclusion}

We have presented the HashTran-DNN framework for making DNN-based malware classifiers robust against adversarial malware samples. The framework is centered at using locality-preserving hash transformations and DAE to reduce, if not eliminate, the effect of adversarial perturbations. Experimental results show that the framework can effectively defend against the four attacks.

Future research problems are abundant. In addition to the limitations mentioned in Section \ref{sec:limitations}, it is an outstanding open problem to fully characterize the implications of the locality-preserving property in defending against adversarial samples.

\ignore{

\section{Acknowledgments}
This work was supported by the Fundamental Research Funds for the Central Universities [grant numbers 30916015104]; the National key research and development program: key projects of international scientific and technological innovation cooperation between governments [grant numbers 2016YFE0108000]; CERNET Next Generation Internet Technology Innovation Project [grant numbers NGII20160122]; the Project of ZTE Cooperation Research [grant numbers 2016ZTE04-11]; Jiangsu province key research and development plan [grant numbers BE2016904]; Jiangsu province key research and development programs: social development project [grant numbers BE2017739]; industry outlook and common key technology projects [grant numbers BE2017100]; and the scholarship under the State Scholarship Fund [grant numbers 201706840123].

}

\ignore{

\section*{Acknowledgments}
This work was supported by the Fundamental Research Funds for the Central Universities [grant numbers 30916015104]; the National key research and development program: key projects of international scientific and technological innovation cooperation between governments [grant numbers 2016YFE0108000]; CERNET Next Generation Internet Technology Innovation Project [grant numbers NGII20160122]; the Project of ZTE Cooperation Research [grant numbers 2016ZTE04-11]; Jiangsu province key research and development plan [grant numbers BE2016904]; Jiangsu province key research and development programs: social development project [grant numbers BE2017739]; industry outlook and common key technology projects [grant numbers BE2017100]; and the scholarship under the State Scholarship Fund [grant numbers 201706840123].

}

\ifCLASSOPTIONcaptionsoff
  \newpage
\fi



\bibliographystyle{IEEEtran}
\bibliography{malware_paper}
\end{document}